\documentclass[11pt]{article}

\usepackage[margin=0.9in]{geometry}
\usepackage{amsmath,amssymb,amsthm,mathtools,bm}
\usepackage{booktabs}
\usepackage{array}
\usepackage{enumitem}
\usepackage{graphicx}
\usepackage{microtype}
\usepackage[round,authoryear]{natbib}
\usepackage[colorlinks=true,linkcolor=blue,citecolor=blue,urlcolor=blue]{hyperref}

\usepackage{algorithm}
\usepackage{algorithmic}
\usepackage{float}
\usepackage{placeins}
\usepackage{comment}

\newtheorem{theorem}{Theorem}[section]
\newtheorem{proposition}[theorem]{Proposition}
\newtheorem{corollary}[theorem]{Corollary}
\newtheorem{lemma}[theorem]{Lemma}

\theoremstyle{remark}
\newtheorem{remark}[theorem]{Remark}

\title{Anytime-Valid Distribution Shift Detection via Predictive Rank Martingales}

\author{Qi Kuang \quad Yin Xia\thanks{Address for correspondence: Yin Xia, Department of Statistics and Data Science, Fudan University, 220 Handan Road, Shanghai, 200433, China. Email: \href{mailto:xiayin@fudan.edu.cn}{xiayin@fudan.edu.cn}.}\\Department of Statistics and Data Science, Fudan University, Shanghai, China}
\date{}

\begin{document}

\maketitle

\begin{abstract}
Many sequential distribution shift detectors update a growing reference set with incoming observations. After a change, this update contaminates the reference set with post-change observations and can weaken subsequent evidence. Keeping the calibration sample fixed mitigates this contamination, but repeated reuse induces dependence among fixed-reference ranks, so arguments based on independent conformal \(p\)-values do not apply. We derive the exact conditional null distribution of the next rank given the preceding ranks and use it to construct a predictive rank martingale (PRM). Thresholding a PRM gives distribution-free, finite-sample anytime marginal type I error control. To target specific departures, we apply a pre-specified feature to each rank, center the resulting payoff under the predictive null law, and use Online Newton Step (ONS) to adapt the bet. Order and dispersion features target directional and center-versus-tail changes, respectively. For any Lipschitz feature with nonzero induced contrast under the alternative, we establish a finite-window detection guarantee and show that the test is consistent as the initial calibration size increases. At a fixed calibration size, however, we derive a power ceiling for every distribution-free detection procedure. Across synthetic and real data, our PRM methods achieve better detection performance than conditional conformal test martingale (CCTM).

\end{abstract}

\section{Introduction}

Distribution shift monitoring asks whether an online data stream has departed from the distribution represented by historical data. A monitoring procedure should detect such changes quickly while controlling false alarms. We seek a sequential procedure for which the probability of ever raising a false alarm is at most \(\alpha\).

Conformal test martingales (CTMs) provide anytime-valid tests of exchangeability by betting on sequential conformal \(p\)-values~\citep{vovk2003testing}. In the classical online construction, each incoming observation is added to the reference sample used for subsequent observations. For distribution shift monitoring, this expanding reference creates a vulnerability: after a change, post-change observations enter the comparison sample and can make later observations appear less unusual. The effect is especially pronounced for early changes, when post-change observations quickly form a substantial fraction of the reference sample.

Keeping the historical calibration sample fixed mitigates this contamination. However, repeated calibration reuse induces dependence across the resulting ranks. The classical martingale argument based on independent conformal \(p\)-values therefore no longer applies. CCTM addresses this problem by betting on the empirical cumulative distribution function (ECDF) of the fixed calibration sample and using a uniform Dvoretzky--Kiefer--Wolfowitz (DKW) confidence band to account for ECDF estimation error~\citep{shaer2026testing}. Its probably approximately correct (PAC) calibration-conditional guarantee, however, does not by itself imply marginal type I error control at the same nominal level, and its confidence band can discount the betting signal. 

Our key observation is that this dependence has an exact sequential structure: under the null, the next fixed-reference rank has a closed-form predictive distribution conditional on the rank history. We use this law to construct a PRM, and thresholding this PRM provides distribution-free, finite-sample anytime marginal type I error control. The construction follows the test-by-betting principle, which interprets accumulated wealth as evidence against the null~\citep{10.1111/rssa.12647}.

The predictive law ensures validity, but detecting a particular shift requires a suitable betting strategy. We therefore use a feature \(h\) to score locations on the reference quantile scale and target shifts of interest without specifying a parametric alternative. The choice of \(h\) affects power but not validity: every pre-specified \(h\) retains the same anytime false-alarm guarantee. We construct Order PRM for directional changes, Dispersion PRM for center-versus-tail changes, and a portfolio PRM that combines their evidence. ONS~\citep{hazan2007logarithmic} adapts the direction and magnitude of each bet.

\subsection{Our Contributions}

Our contributions are:
\begin{itemize}[leftmargin=1.5em,itemsep=0.5pt]
\item We derive the exact sequential predictive null law of fixed-reference ranks and use it to construct a PRM with distribution-free, finite-sample anytime type I error control.
\item We construct Order PRM and Dispersion PRM by combining rank features with ONS betting. They target directional and center-versus-tail changes, respectively, and a PRM portfolio covers several pre-specified features while keeping the probability of any false alarm at most \(\alpha\).
\item We provide a power analysis of PRMs. We establish a finite-window detection guarantee and show that the test is consistent as the initial calibration size increases. At a fixed calibration size, we also derive a power ceiling that applies to every distribution-free detection procedure.
\item Empirically, Order PRM reaches \(80\%\) detection earlier than CCTM in all synthetic settings from \citet{shaer2026testing}, and the PRM portfolio remains close to Order PRM in these settings. Dispersion PRM and the portfolio detect symmetric shifts for which CCTM has near-zero power within the studied horizon.
\end{itemize}

\subsection{Related Work}

Our setting is related to classical placement-based two-sample tests, including partially sequential procedures that compare a fixed first sample with sequentially arriving observations~\citep{wolfe1977class,orban1980distribution,orban1982class}. These studies motivate our use of features on the reference quantile scale, but their procedures use pre-specified sampling rules or finite horizons. We instead embed placement scores in an e-process that remains valid under arbitrary stopping.

Turning to anytime-valid methods, CTMs are constructed by betting on sequential conformal \(p\)-values computed using a reference set that grows with the online stream~\citep{vovk2003testing,vovk2005algorithmic,fedorova2012plug,volkhonskiy2017inductive,vovk2021retrain}. Our construction follows the same test-by-betting principle but addresses the dependence induced by repeatedly comparing online observations with a fixed calibration sample. Other anytime-valid approaches include safe e-processes for binary observations~\citep{ramdas2022testing}, pairwise betting~\citep{saha2024testing}, and prediction-based betting for sequential two-sample and independence testing~\citep{podkopaev2023sequential}, but use different observation schemes or alternative structures.

Calibration reuse has also been studied in offline conformal multiple testing. \citet{bates2023} used positive regression dependence among conformal \(p\)-values to establish false discovery rate control~\citep{benjamini1995controlling,benjamini2001control}. Their goal is simultaneous testing, whereas we characterize the sequential conditional law for anytime-valid monitoring.

CCTM is the closest existing method for fixed-reference monitoring~\citep{shaer2026testing}. It handles calibration uncertainty through a uniform DKW band, whereas we model the dependence induced by calibration reuse through its exact sequential predictive law. This yields a nonnegative martingale and finite-sample anytime marginal type I error control; Appendix~\ref{app:cctm-validity} discusses how this guarantee differs from CCTM's PAC calibration-conditional statement. Sequential conditioning is essential, since multiplying marginally valid e-values need not produce an e-process~\citep{vovk2025conformal}.

\section{Predictive Rank Martingale Construction}
\label{sec:problem}
\label{sec:centered-rank}

We first derive the predictive null law of fixed-reference ranks and use it to construct a general family of PRMs. We then introduce features that target different departures, convert a chosen feature into a betting process, and use ONS to adapt the betting coefficient. Finally, we combine the processes corresponding to several features in a portfolio.

\subsection{Problem setup}
\label{sec:rank-framework}

Let \(D_0=(Y_1,\ldots,Y_n)\) be the fixed calibration observations and let \(X_1,X_2,\ldots\) be the online observations. For clarity, we present the method for scalar observations. High-dimensional observations can be reduced to scalar scores according to a rule fixed before observing the calibration and monitoring data. Under the null, the calibration and online observations are independent and identically distributed (i.i.d.):
\[
  H_0(P):\qquad
  Y_1,\ldots,Y_n,X_1,X_2,\ldots\overset{\mathrm{i.i.d.}}{\sim}P,
\]
where \(P\) is unknown. We seek a stopping time \(\tau\) satisfying
\[
  \sup_P \mathbb{P}_{H_0(P)}(\tau<\infty)\le \alpha,
\]
where the probability is taken jointly over \(D_0\) and the online stream.

For continuous observations, define the rank of \(X_t\) among the fixed calibration observations as \(R_t=1+\sum_{i=1}^n \mathbf{1}\{Y_i\le X_t\}\in\{1,\ldots,n+1\}\). For distributions with atoms, we break ties using independent \(W_i,W_t'\sim\operatorname{Unif}(0,1)\). We compare \((Y_i,W_i)\) and \((X_t,W_t')\) lexicographically and define \(R_t\) from the resulting order. Let \(\mathcal{G}_t=\sigma(R_1,\ldots,R_t)\). Reusing the same calibration sample induces dependence across the ranks. This dependence nevertheless admits an exact sequential characterization, which we state next.

\subsection{Predictive rank law and martingale construction}

{Define \(N_{t-1,j}=\sum_{s=1}^{t-1}\mathbf{1}\{R_s=j\}\). Before observing \(R_t\), a bettor may use past ranks to choose an \(\mathcal{G}_{t-1}\)-measurable distribution \(q_t\) over the next rank. The next theorem gives the exact null predictive distribution and shows that comparing \(q_t\) with this distribution yields a PRM.}

{
\begin{theorem}
\label{thm:rank-eprocess}
For every distribution \(P\), under \(H_0(P)\), for every \(t\ge1\),
\begin{equation}
  \mathbb{P}(R_t=j\mid\mathcal{G}_{t-1})
  =
  \pi_{t,j}
  :=
  \frac{1+N_{t-1,j}}{n+t},
  \qquad j=1,\ldots,n+1.
  \label{eq:fixed-rank-law}
\end{equation}
Let \(q_t=(q_{t,1},\ldots,q_{t,n+1})\) be any \(\mathcal{G}_{t-1}\)-measurable probability vector, and define
\[
  E_t=\frac{q_{t,R_t}}{\pi_{t,R_t}},
  \qquad
  M_t=\prod_{s=1}^t E_s,
  \qquad M_0=1.
\]
Under \(H_0(P)\), \((M_t)_{t\ge0}\) is a nonnegative martingale, so by Ville's inequality~\citep{ville1939etude},
\[
  \sup_P\,
  \mathbb{P}_{H_0(P)}\!\left(\sup_{t\ge0}M_t\ge \frac1\alpha\right)\le \alpha .
\]
\end{theorem}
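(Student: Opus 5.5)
The plan is to reduce everything to a Pólya urn. First I remove ties. After applying the lexicographic tie-breaking with the auxiliary uniforms, the pairs \((Y_i,W_i)\) and \((X_t,W_t')\) are i.i.d. from a distribution whose lexicographic order is almost surely strict. I then replace each observation \(Z\) by its randomized probability integral transform \(U=F(Z^-)+W\,(F(Z)-F(Z^-))\), where \(F\) is the CDF of \(P\). This map is monotone with respect to the lexicographic order, and it produces i.i.d.\ \(\operatorname{Unif}(0,1)\) variables \(U_1,\dots,U_n\) (calibration) and \(V_1,V_2,\dots\) (online). It preserves the strict order almost surely, so the ranks \(R_t\) are unchanged. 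Consequently the joint law of \((R_t)_{t\ge1}\) does not depend on \(P\), and it suffices to treat the continuous uniform case.

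Next I condition on the calibration sample. Let \(U_{(1)}<\dots<U_{(n)}\) be the order statistics, and let \(G=(G_1,\dots,G_{n+1})\) be the spacings, with \(G_j=U_{(j)}-U_{(j-1)}\), \(U_{(0)}=0\) and \(U_{(n+1)}=1\). The spacings are uniformly distributed on the simplex, i.e.\ \(G\sim\mathrm{Dirichlet}(1,\dots,1)\). Given \(G\), the ranks \(R_1,R_2,\dots\) are i.i.d.\ categorical with probabilities \(G\), since \(R_t=j\) exactly when \(V_t\) falls in the \(j\)-th gap. Hence \((R_t)\) is a Dirichlet--multinomial sequence, and Dirichlet--categorical conjugacy gives the conditional law of \(G\) given \(\mathcal{G}_{t-1}\) as \(\mathrm{Dirichlet}(1+N_{t-1,1},\dots,1+N_{t-1,n+1})\). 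Taking the posterior mean,
\[
\mathbb{P}(R_t=j\mid\mathcal{G}_{t-1})=\mathbb{E}[G_j\mid\mathcal{G}_{t-1}]=\frac{1+N_{t-1,j}}{n+1+(t-1)}=\pi_{t,j},
\]
which is \eqref{eq:fixed-rank-law}. The one point needing care here is the legitimacy of conditioning on \(G\) when computing a probability given only \(\mathcal{G}_{t-1}\). This follows from the tower property, because \(R_t\) is conditionally independent of \(R_{1:t-1}\) given \(G\). As a cross-check, I would compute the joint probability \(\mathbb{P}(R_1=r_1,\dots,R_t=r_t)=\mathbb{E}\prod_j G_j^{N_{t,j}}\) directly via the Dirichlet moment formula \(n!\prod_j N_{t,j}!/(n+t)!\) and take ratios.

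For the martingale, note that \(\pi_{t,j}\ge 1/(n+t)>0\), so \(E_t\) is well defined and nonnegative. Since \(q_t\) is \(\mathcal{G}_{t-1}\)-measurable,
\[
\mathbb{E}[E_t\mid\mathcal{G}_{t-1}]=\sum_{j=1}^{n+1}\pi_{t,j}\,\frac{q_{t,j}}{\pi_{t,j}}=\sum_j q_{t,j}=1.
\]
It follows that \(M_t=M_{t-1}E_t\) is a nonnegative \((\mathcal{G}_t)\)-martingale with \(M_0=1\), and it is integrable by induction. Ville's inequality then gives \(\mathbb{P}(\sup_t M_t\ge1/\alpha)\le\alpha\) for each \(P\). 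Because the law of the rank sequence is the same for every \(P\), the bound holds uniformly, and so the supremum over \(P\) is also at most \(\alpha\).

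The main obstacle is the first step: making the tie-breaking reduction rigorous for distributions with atoms. Concretely, I must verify that the randomized PIT is strictly order-preserving almost surely with respect to the lexicographic comparison. This holds because \(U\) is strictly increasing in \(W\) within an atom and preserves order across distinct values. Once this is established, the remainder is the standard Pólya-urn computation.
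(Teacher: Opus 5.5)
Your proof is correct and follows essentially the same route as the paper. Both arguments reduce to uniforms, use that the calibration spacings are \(\operatorname{Dirichlet}(1,\ldots,1)\) and that the ranks are conditionally i.i.d.\ categorical given the spacings, and obtain the predictive law as the Dirichlet posterior mean via the tower property, followed by \(\mathbb{E}[E_t\mid\mathcal{G}_{t-1}]=1\) and Ville's inequality. Your explicit randomized probability integral transform, together with the a.s.\ distinctness of the resulting uniforms, rigorously fills in the atomic case that the paper only asserts with ``the same argument applies.''
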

}

The predictive law in \eqref{eq:fixed-rank-law} coincides with that of a standard Pólya urn~\citep{blackwell1973ferguson}. It characterizes the dependence induced by repeated reuse of the fixed calibration sample: each \(R_t\) is computed against the same \(D_0\). Since $q_t$ is a probability vector, $\mathbb{E}_{H_0(P)}(E_t\mid\mathcal{G}_{t-1})
 =1$, so the martingale property follows immediately. Therefore, the stopping rule \(\tau=\inf\{t\ge1:M_t\ge1/\alpha\}\) satisfies \(\sup_P\mathbb{P}_{H_0(P)}(\tau<\infty)\le\alpha\), giving finite-sample anytime type I error control. The allocation of \(q_t\) across ranks affects power but not this guarantee.

To choose \(q_t\), write the factor as \(E_t=1+\lambda_tZ_t\), where \(Z_t\) is bounded and conditionally mean zero under the null. This centering preserves validity, while a nonzero mean under a shift provides signal. The predictable coefficient \(\lambda_t\) controls the direction and magnitude of the bet and is constrained so that \(E_t\ge0\). We next construct \(Z_t\) from a feature contrast.

\subsection{Features and targeted alternatives}
\label{sec:feature-targets}

To construct \(Z_t\), we use a pre-specified function \(h:[0,1]\to\mathbb R\) to score whether an observation lies in the lower tail, center, or upper tail of the calibration distribution. Different choices of \(h\) weight these regions differently and therefore target different departures. We choose \(h\) before observing the calibration and monitoring data. We call \(h\) a feature if
\[
 \int_0^1 h(u)\,du=0,
 \qquad
 \sup_{u\in[0,1]}h(u)-\inf_{u\in[0,1]}h(u)\le1.
\]
As shown below, adding a constant to \(h\) does not change the resulting procedure, so the first condition is imposed only for notational convenience. The second condition ensures that the martingale constructed below is nonnegative. Any function bounded on \([0,1]\) can be shifted and rescaled to satisfy both conditions.

Suppose that the calibration observations follow a continuous distribution \(P\), while an online observation \(X\) follows \(Q\). The population signal targeted by \(h\) is the contrast:
\[
 \Delta_h(P,Q)=\mathbb{E}_{X\sim Q}h\{F_P(X)\}.
\]
Under the null \(Q=P\), \(F_P(X)\sim\operatorname{Unif}(0,1)\), and hence \(\Delta_h(P,P)=\int_0^1h(u)\,du=0\). Positive and negative values of \(\Delta_h(P,Q)\) indicate excess alternative mass in regions where \(h\) is respectively large or small. Its magnitude is the population signal available to that feature. \(\Delta_h(P,Q)=0\) need not imply \(P=Q\); it only means that the chosen feature has no population signal for that departure.

We first consider the order feature
\[
 h_{\rm ord}(u)=u-\frac12,
 \qquad u\in[0,1].
\]
It is the classical Mann--Whitney placement score~\citep{orban1980distribution,orban1982class} and also coincides with the linear reference-quantile score underlying CCTM~\citep{shaer2026testing}. It assigns negative values to lower reference ranks and positive values to upper reference ranks, with larger magnitude farther from the midpoint.

For the order feature, the contrast has the pairwise interpretation \(\Delta_{\rm ord}(P,Q)=\mathbb{E}_QF_P(X)-1/2=\mathbb{P}(Y\le X)-1/2\), where \(Y\sim P\) is independent of \(X\). It measures whether observations tend to rank above or below reference observations. A rightward shift gives \(\Delta_{\rm ord}>0\), whereas a leftward shift gives \(\Delta_{\rm ord}<0\). The order feature therefore targets location and stochastic order changes.

However, the order feature has no population signal for some common shifts. Its lower and upper rank scores can cancel when the change is symmetric. For example, under the Gaussian scale shift \(P=\mathcal N(0,1)\) and \(Q=\mathcal N(0,\sigma^2)\), we have \(\Delta_{\rm ord}(\sigma)=0\) for every \(\sigma>0\). To detect such changes, we use a score related to the classical Ansari--Bradley dispersion score~\citep{ansari1960rank}:
\[
 h_{\rm disp}(u)=|2u-1|-\frac12,
 \qquad u\in[0,1].
\]
We call \(h_{\rm disp}\) the dispersion feature in our PRM construction. The term \(|2u-1|\) measures distance from the reference median on the rank scale. Subtracting \(1/2\) gives zero mean.

For this feature, the dispersion contrast is \(\Delta_{\rm disp}(P,Q)=\mathbb{E}_Q|2F_P(X)-1|-1/2\). This contrast compares mass in the reference tails with mass near the reference center. Moving probability mass toward the tails gives \(\Delta_{\rm disp}>0\), whereas concentrating mass near the reference median gives \(\Delta_{\rm disp}<0\). The two tails reinforce rather than cancel, so the feature targets scale and other center-versus-tail changes.

For the Gaussian scale shift above, \(\Delta_{\rm disp}(\sigma)=2\arctan(\sigma)/\pi-1/2\). This contrast is nonzero for every \(\sigma\ne1\): it is positive for scale expansion and negative for scale contraction. Having defined the feature \(h\), we next construct its betting payoff \(Z_t\).

\subsection{Predictively centered feature betting}
\label{sec:centered-feature-betting}

The contrast \(\Delta_h(P,Q)\) is defined in terms of \(h\{F_P(X)\}\), but \(F_P\) is unknown. For each \(X_t\), the fixed-reference rank gives the empirical placement \((R_t-1)/n=\widehat F_n(X_t)\), so we use \(h\{(R_t-1)/n\}\) instead. Under the null, Equation~\eqref{eq:fixed-rank-law} gives its conditional mean. Subtracting this mean defines the betting payoff:
\[
 Z_t
 =h\!\left(\frac{R_t-1}{n}\right)
 -\sum_{j=1}^{n+1}\pi_{t,j}
 h\!\left(\frac{j-1}{n}\right).
\]
By \eqref{eq:fixed-rank-law}, \(\mathbb{E}_{H_0(P)}(Z_t\mid\mathcal{G}_{t-1})=0\). Replacing \(h\) by \(h+c\) leaves \(Z_t\), and hence the resulting PRM, unchanged because the constant cancels between the observed score and its conditional expectation. Since the values of \(h\) lie in an interval of width at most one, \(|Z_t|\le1\). Under an alternative, the drift of \(Z_t\) is linked to \(\Delta_h(P,Q)\). Specifically, when \(P\) is continuous and \(h\) is Lipschitz,
\[
 \mathbb{E}_{P^n\otimes Q^\infty}Z_t
 =\frac{n+1}{n+t}\{\Delta_h(P,Q)+o_n(1)\},
\]
where \(o_n(1)\to0\) and does not depend on \(t\). Thus, for any fixed \(t\), the expectation approaches \(\Delta_h(P,Q)\) as \(n\to\infty\). Lemma~\ref{lem:feature-drift} establishes this relation. Since \(Z_t\) is bounded and conditionally centered under the null, this yields the following feature-based specialization of Theorem~\ref{thm:rank-eprocess}.

\begin{proposition}
\label{prop:feature-directed}
For any \(\mathcal{G}_{t-1}\)-measurable \(\lambda_t\in[-1,1]\), define
\[
 E_t=1+\lambda_tZ_t,
 \qquad
 M_t=\prod_{s=1}^t E_s,
 \qquad M_0=1.
\]
Under \(H_0(P)\), \((M_t)_{t\ge0}\) is a nonnegative martingale and hence an e-process.
\end{proposition}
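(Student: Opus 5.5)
The plan is to reduce Proposition~\ref{prop:feature-directed} to Theorem~\ref{thm:rank-eprocess}. We show that each factor \(E_t=1+\lambda_tZ_t\) has the form \(q_{t,R_t}/\pi_{t,R_t}\) for a \(\mathcal{G}_{t-1}\)-measurable probability vector \(q_t\). We also verify nonnegativity and the conditional-mean-one property directly, so the argument does not depend solely on that reduction.

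First, we check measurability. The weights \(\pi_{t,j}=(1+N_{t-1,j})/(n+t)\) are functions of \(R_1,\ldots,R_{t-1}\) and are therefore \(\mathcal{G}_{t-1}\)-measurable. Hence the centering term \(c_t:=\sum_j\pi_{t,j}h\{(j-1)/n\}\) is \(\mathcal{G}_{t-1}\)-measurable, and \(Z_t=h\{(R_t-1)/n\}-c_t\) is \(\mathcal{G}_t\)-measurable.

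Second, we check boundedness and nonnegativity. Since \(c_t\) is a convex combination of values of \(h\), it lies in \([\inf h,\sup h]\). Because \(h\{(R_t-1)/n\}\) lies in the same interval, whose width is at most one, we get \(|Z_t|\le1\). With \(|\lambda_t|\le1\), this gives \(E_t\ge 1-|\lambda_t||Z_t|\ge0\).

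Third, we check centering. By the predictive law \eqref{eq:fixed-rank-law}, \(\mathbb{E}_{H_0(P)}[h\{(R_t-1)/n\}\mid\mathcal{G}_{t-1}]=c_t\), so \(\mathbb{E}(Z_t\mid\mathcal{G}_{t-1})=0\). Since \(\lambda_t\) is predictable and \(Z_t\) is bounded, we can pull \(\lambda_t\) out of the conditional expectation, giving \(\mathbb{E}(E_t\mid\mathcal{G}_{t-1})=1\).

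Equivalently, define \(q_{t,j}=\pi_{t,j}\bigl(1+\lambda_t[h\{(j-1)/n\}-c_t]\bigr)\). These entries are nonnegative by the bound above, and they sum to \(1+\lambda_t(c_t-c_t)=1\). They also satisfy \(q_{t,R_t}/\pi_{t,R_t}=E_t\). Therefore Theorem~\ref{thm:rank-eprocess} applies verbatim. Every \(\pi_{t,j}>0\), so division by \(\pi_{t,j}\) causes no problem.

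From these steps, \(M_t=M_{t-1}E_t\) is adapted, nonnegative and integrable, since \(0\le M_t\le 2^t\). Moreover, \(\mathbb{E}(M_t\mid\mathcal{G}_{t-1})=M_{t-1}\mathbb{E}(E_t\mid\mathcal{G}_{t-1})=M_{t-1}\), so \(M\) is a nonnegative martingale with \(M_0=1\). A nonnegative martingale with initial value one is an e-process: by optional stopping and Fatou's lemma, \(\mathbb{E}M_\tau\le1\) for every stopping time \(\tau\). Because this holds for every \(P\), Ville's inequality gives anytime validity, as in Theorem~\ref{thm:rank-eprocess}.

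The only point needing care is the bound \(|Z_t|\le 1\). It requires noting that \(c_t\) is an average of values of \(h\) rather than the integral \(\int h=0\), so the width condition on \(h\), not the zero-mean condition, drives nonnegativity.
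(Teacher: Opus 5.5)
Your proposal is correct and follows essentially the same route as the paper: the width condition on \(h\) gives \(|Z_t|\le1\) and hence \(E_t\ge0\), and the predictive law \eqref{eq:fixed-rank-law} gives \(\mathbb{E}(E_t\mid\mathcal{G}_{t-1})=1+\lambda_t\mathbb{E}(Z_t\mid\mathcal{G}_{t-1})=1\). Your reduction to Theorem~\ref{thm:rank-eprocess} through the vector \(q_{t,j}\), which the paper also gives in the main text, and your integrability bound \(M_t\le2^t\) add detail but do not change the argument.
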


As a special case of Theorem~\ref{thm:rank-eprocess}, the stopping rule \(\tau_h=\inf\{t\ge1:M_t\ge1/\alpha\}\) satisfies \(\sup_P\mathbb{P}_{H_0(P)}(\tau_h<\infty)\le\alpha\), so every pre-specified feature retains anytime type I error control.

The corresponding probability vector in Theorem~\ref{thm:rank-eprocess} is
\[
 q_{t,j}
 =
 \pi_{t,j}
 \left[
 1+\lambda_t
 \left\{
 h\!\left(\frac{j-1}{n}\right)
 -\sum_{\ell=1}^{n+1}\pi_{t,\ell}
 h\!\left(\frac{\ell-1}{n}\right)
 \right\}
 \right].
\]
The range condition ensures \(q_{t,j}\ge0\), while predictive centering gives \(\sum_{j=1}^{n+1}q_{t,j}=1\). Under an alternative, \(Z_t\) estimates the feature contrast \(\Delta_h(P,Q)\), while \(\lambda_t\) controls the direction and size of the bet. When \(\lambda_t\) and \(Z_t\) have the same sign, \(E_t>1\) and the wealth increases. We therefore adapt \(\lambda_t\) from past observations to favor positive values of \(\lambda_tZ_t\) and thereby accumulate wealth under an alternative. The next subsection uses ONS to perform this adaptation.

\subsection{Adaptive betting with ONS}

The sign and size of the feature signal are unknown, so we choose \(\lambda_t\) from past data. Proposition~\ref{prop:feature-directed} permits any \(\mathcal{G}_{t-1}\)-measurable \(\lambda_t\in[-1,1]\). We use ONS~\citep{hazan2007logarithmic} to make the cumulative log wealth \(\log M_T=\sum_{t=1}^T\log(1+\lambda_tZ_t)\) large.

For stable ONS updates, we restrict the betting coefficient to a predictable interval that keeps every betting factor \(E_t=1+\lambda_tZ_t\) uniformly away from zero.
Write \(h_j=h\{(j-1)/n\}\) and define
\[
 B_t
 =\max_{1\le j\le n+1}
 \left|h_j-\sum_{\ell=1}^{n+1}\pi_{t,\ell}h_\ell\right|,
 \qquad
 b_t=
 \begin{cases}
  \min\{1,3/(4B_t)\}, & B_t>0,\\
 1, & B_t=0,
 \end{cases}.
\]
Here \(B_t\) is the largest possible value of
\(|Z_t|\). The feature range condition gives \(B_t\le1\), so
\(3/4\le b_t\le1\). Initialize \(\lambda_1=0\).
After observing \(Z_t\), let
\(g_t=Z_t/(1+\lambda_tZ_t)\), compute \(b_{t+1}\) from the updated counts,
and update
\begin{equation}
 \lambda_{t+1}
 =\Pi_{[-b_{t+1},b_{t+1}]}\left(
 \lambda_t+\frac{(9/2)g_t}{1+\sum_{s=1}^t g_s^2}
 \right).
 \label{eq:practical-ons-main}
\end{equation}
Here \(\Pi\) denotes projection onto the subscripted interval. Since \(|Z_t|\le B_t\), this choice guarantees $1+\lambda_tZ_t\ge\frac14$. Thus every betting factor is uniformly bounded away from zero. The quantity \(g_t\) is the
derivative of \(\log(1+\lambda Z_t)\) at \(\lambda_t\). Its sign determines whether the next coefficient $\lambda_{t+1}$ increases or decreases, while the cumulative squared gradients scale the step size.
The update uses only ranks through time \(t\), so \(\lambda_{t}\) is \(\mathcal G_{t-1}\)-measurable. Moreover, \(\lambda_t\in[-1,1]\). The resulting process therefore remains valid, with
\(\mathbb{P}_{H_0(P)}(\sup_{t\ge1}M_t\ge1/\alpha)\le\alpha\).
The next proposition quantifies the log wealth lost by adapting
\(\lambda_t\) online relative to a fixed coefficient selected after observing
the full path.

\begin{proposition}
\label{prop:ons-regret}
Let \(M_T\) be generated by \eqref{eq:practical-ons-main}, and define
\(\underline b_T=\min_{1\le t\le T}b_t\). Then, for every path,
\[
 \log M_T
 \ge
 \sup_{|u|\le\underline b_T}
 \sum_{t=1}^T\log(1+uZ_t)-O(\log T).
\]
\end{proposition}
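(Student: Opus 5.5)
The argument follows the standard ONS regret analysis of \citet{hazan2007logarithmic}, adapted to the predictable, time-varying domains \([-b_t,b_t]\). Write \(\ell_t(\lambda)=-\log(1+\lambda Z_t)\). For \(|\lambda|\le 1\) and \(|Z_t|\le B_t\le 1\) this loss is convex, and \(g_t=-\ell_t'(\lambda_t)\). The first step is an exp-concavity (local quadratic) lower bound. Since \(1+\lambda_tZ_t\ge 1/4\) and \(1+uZ_t\ge 1/4\) for \(|u|\le b_t\), we use the inequality \(\log(1+x)\ge x-x^2/2\cdot c\) on the relevant range. Applied to \(x=(u-\lambda_t)Z_t/(1+\lambda_tZ_t)\), it gives
\[
 \log(1+uZ_t)-\log(1+\lambda_tZ_t)\le g_t(u-\lambda_t)-\frac{\beta}{2}g_t^2(u-\lambda_t)^2
\]
for a constant \(\beta>0\), uniformly over \(|u|\le b_t\). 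The constant \(9/2\) in the step is \(1/\beta\) for the \(\beta\) this inequality yields; I would verify the explicit constant, which follows from \(|x|\) being bounded because both factors are at least \(1/4\). It is essential that the comparator \(u\) ranges only over \([-\underline b_T,\underline b_T]\subseteq[-b_t,b_t]\) for every \(t\le T\).

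Next comes the potential argument. Set \(A_t=1+\sum_{s\le t}g_s^2\), so the update reads \(\lambda_{t+1}=\Pi_{[-b_{t+1},b_{t+1}]}(\lambda_t+g_t/(\beta A_t))\). Fix \(u\) with \(|u|\le\underline b_T\). Then \(u\in[-b_{t+1},b_{t+1}]\), and projection onto an interval containing \(u\) is nonexpansive. Therefore
\[
 A_t(\lambda_{t+1}-u)^2\le A_t(\lambda_t-u)^2+\tfrac{2}{\beta}g_t(\lambda_t-u)+\tfrac{g_t^2}{\beta^2A_t}.
\]
Rearranging gives
\[
 g_t(u-\lambda_t)\le \tfrac{\beta}{2}\bigl[A_t(\lambda_t-u)^2-A_t(\lambda_{t+1}-u)^2\bigr]+\tfrac{g_t^2}{2\beta A_t}.
\]
Summing the first-step inequality over \(t\) and substituting, we use \(A_t-A_{t-1}=g_t^2\). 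The quadratic terms then telescope: the \(-\frac\beta2 g_t^2(u-\lambda_t)^2\) term exactly absorbs the increment \(\frac\beta2(A_t-A_{t-1})(\lambda_t-u)^2\). What remains is \(\frac\beta2A_0(\lambda_1-u)^2\le\beta/2\) (using \(\lambda_1=0\), \(|u|\le1\)) plus \(\frac1{2\beta}\sum_t g_t^2/A_t\).

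The final step bounds the remainder. By the standard log-sum bound, \(\sum_t g_t^2/A_t\le\log A_T\). Since \(|g_t|\le |Z_t|/(1/4)\le 4\), we have \(A_T\le 1+16T\), so the regret is at most \(\beta/2+\frac{1}{2\beta}\log(1+16T)=O(\log T)\). The bound is uniform in \(u\), so taking the supremum over \(|u|\le\underline b_T\) yields the claim for every path. This is deterministic; no probabilistic input is used.

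The main obstacle is the time-varying feasible set. Standard ONS assumes a fixed convex domain, and projecting onto a shrinking or growing interval could break the nonexpansiveness step. Restricting the comparator to the intersection \([-\underline b_T,\underline b_T]\) resolves this, because projection onto any interval containing \(u\) never increases distance to \(u\). A secondary point of care is checking that the curvature constant \(\beta\) is compatible with the step size \(9/2\), given the lower bound \(1/4\) on the wealth factors. If the constants do not match exactly, a mismatched \(\beta\) still yields \(O(\log T)\) regret with a different constant, since the telescoping then leaves a nonpositive quadratic remainder whenever the step uses a curvature no larger than the true one.
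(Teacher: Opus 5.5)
Your argument is essentially the paper's. The paper uses exactly $\log(1+x)\le x-x^2/9$ on $x\in[-6/7,6]$; this range holds because both wealth factors lie in $[1/4,7/4]$. So $\beta=2/9$ matches the step $9/2$, and the rest is the same: projection nonexpansiveness, telescoping via $A_t=A_{t-1}+g_t^2$, and $\sum_t g_t^2/A_t\le\log(1+16T)$. Note that the inequality you need is $\log(1+x)\le x-\tfrac{\beta}{2}x^2$; your prose has $\ge$.

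One slip: $|u|\le\underline b_T$ does not give $u\in[-b_{T+1},b_{T+1}]$ at $t=T$, so nonexpansiveness can fail at the last step. The paper handles this by telescoping only through $T-1$ and bounding the last increment by $\log 7$. Alternatively, at $t=T$ you can use the unprojected point $\lambda_T+9g_T/(2A_T)$ and drop the nonnegative final potential term.
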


Thus ONS incurs only logarithmic regret relative to the best fixed coefficient
that remains feasible throughout the observed path.

Algorithm~\ref{alg:feature-rank-ons} summarizes the PRM with ONS betting.

\begin{algorithm}[H]
\caption{PRM with ONS Betting}
\label{alg:feature-rank-ons}
\small
\begin{algorithmic}[1]
\STATE Initialize \(M_0=A_0=1\), \(\lambda_1=0\), and \(N_{0,j}=0\)
for \(j=1,\ldots,n+1\).

\FOR{\(t=1,2,\ldots\)}
 \STATE Compute \(\pi_{t,j}=(1+N_{t-1,j})/(n+t)\) and the projection radius \(b_t\).
 \STATE Rank \(X_t\) against \(D_0\) to obtain \(R_t\), and set
 $
Z_t=
h\!\left((R_t-1)/n\right)
-
\sum_{j=1}^{n+1}\pi_{t,j}
h\!\left((j-1)/n\right).
 $
 \STATE Update \(M_t=M_{t-1}(1+\lambda_tZ_t)\).
 \STATE \textbf{If} \(M_t\ge1/\alpha\), set \(\tau_h=t\), reject, and stop.
 \STATE Set \(g_t=Z_t/(1+\lambda_tZ_t)\) and
 \(A_t=A_{t-1}+g_t^2\).
 \STATE Update
 \(N_{t,j}=N_{t-1,j}+\mathbf{1}\{R_t=j\}\) for \(j=1,\ldots,n+1\).
 \STATE From \(N_t\), compute \(\pi_{t+1}\) and \(b_{t+1}\), then update
 \(\lambda_{t+1}\) by \eqref{eq:practical-ons-main}.
\ENDFOR
\end{algorithmic}
\end{algorithm}

\subsection{PRM portfolios}

Write \(M_{h,t}\) for the wealth process constructed using feature \(h\). In particular, the order and dispersion features yield \(M_{{\rm ord},t}\) and \(M_{{\rm disp},t}\), which we call \emph{Order PRM} and \emph{Dispersion PRM}, respectively.

Accordingly, we may choose \(h_1,\ldots,h_m\), run their PRMs in parallel, and form the weighted average
\[
 M_t^{\rm mix}=\sum_{\ell=1}^m\omega_\ell M_{h_\ell,t},
 \qquad
 \omega_\ell\ge0,\quad \sum_{\ell=1}^m\omega_\ell=1.
\]
As a convex combination of nonnegative martingales, \(M_t^{\rm mix}\) is also a nonnegative martingale and retains the same anytime type I error guarantee. It also satisfies \(M_t^{\rm mix}\ge\omega_\ell M_{h_\ell,t}\) for every \(\ell\). If \(\omega_\ell>0\) and \(\Delta_{h_\ell}(P,Q)\ne0\) for at least one \(\ell\), the portfolio retains detection ability. Relative to component \(\ell\), the mixture loses at most \(\log(1/\omega_\ell)\) in log wealth.

For the order and dispersion features, equal weights give the \emph{PRM portfolio} \(M_t^{\rm OD}=\frac12M_{{\rm ord},t}+\frac12M_{{\rm disp},t}\). It satisfies \(\log M_t^{\rm OD}\ge\max_{k\in\{\rm ord,disp\}}\log M_{k,t}-\log2\). Additional features can be included when other types of shift are plausible.

\section{Detection Guarantees and Fixed-Calibration Limits}
\label{sec:limits}

The false alarm guarantee in Theorem~\ref{thm:rank-eprocess} holds for every calibration size and every monitoring horizon. We next study the detection ability of PRMs with ONS betting. We first establish a finite-window guarantee for any Lipschitz feature with nonzero induced contrast under the alternative. The result applies to both the order and dispersion features and also yields consistency as the initial calibration size increases. We close with limits of distribution-free procedures under marginal and calibration-conditional type I error control.
\subsection{Detection guarantees}
\label{sec:power}

Throughout this subsection, \(P\) is continuous, whereas \(Q\) may be arbitrary.

\begin{theorem}
\label{thm:feature-ons-rate}
Let \(h:[0,1]\to\mathbb R\) be a Lipschitz feature with Lipschitz constant \(L_h\), and write \(\Delta_h=\Delta_h(P,Q)\). Let \(\tau_h\) be the stopping time produced by Algorithm~\ref{alg:feature-rank-ons} with feature \(h\), and suppose that \(\Delta_h\ne0\). For \(\beta\in(0,1]\), define \(r_{h,\beta}=|\Delta_h|^{-2}\log\{e/(\alpha\beta|\Delta_h|)\}\). There is a universal constant \(C>0\) such that, if \(n\ge\lceil C\max\{1,L_h^2\}r_{h,\beta}\rceil\), then
\[
 \mathbb{P}_{P^n\otimes Q^\infty}(\tau_h\le\lceil Cr_{h,\beta}\rceil)
 \ge1-\beta.
\]
\end{theorem}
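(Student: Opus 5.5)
The plan is to combine the ONS regret bound of Proposition~\ref{prop:ons-regret} with a concentration argument for \(\sum_{t\le T}Z_t\) over the window \(T=\lceil Cr_{h,\beta}\rceil\). On this window the calibration size is large relative to \(T\). Since \(\underline b_T\ge 3/4\), Proposition~\ref{prop:ons-regret} allows us to compare against the fixed coefficient \(u=\tfrac12\,\mathrm{sign}(\Delta_h)\,|\Delta_h|\), or any \(|u|\le 3/4\). Using \(\log(1+x)\ge x-x^2\) for \(|x|\le 1/2\) and \(|Z_t|\le1\), we get
\[
\log M_T\ge u\sum_{t=1}^T Z_t-u^2T-c_0\log T .
\]
It therefore suffices to show that, with probability at least \(1-\beta\),
\[
\mathrm{sign}(\Delta_h)\sum_{t\le T}Z_t\ge T|\Delta_h|/2 .
\]
Granting this, \(\log M_T\ge T\Delta_h^2/4-T\Delta_h^2/4\cdot\tfrac12-c_0\log T\) after tuning \(u=|\Delta_h|/2\) times the sign. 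This is of order \(T\Delta_h^2\). The choice \(T\asymp \Delta_h^{-2}\log\{e/(\alpha\beta|\Delta_h|)\}\) then makes it exceed \(\log(1/\alpha)+c_0\log T\), because \(\log T\lesssim\log(1/|\Delta_h|)+\log\log(\cdots)\) is absorbed by the constant \(C\). Hence \(M_T\ge1/\alpha\), and the algorithm stopped at some \(\tau_h\le T\).

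For the concentration step, decompose
\[
Z_t=\bigl[h(\widehat F_n(X_t))-\Delta_h^{(n)}\bigr]+\bigl[\Delta_h^{(n)}-\textstyle\sum_j\pi_{t,j}h_j\bigr]+\Delta_h^{(n)}-\Delta_h+\Delta_h ,
\]
where \(\Delta_h^{(n)}=\mathbb E_Q[h(\widehat F_n(X))\mid D_0]\).

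The first bracket is conditionally i.i.d. given \(D_0\), bounded and mean zero. Hoeffding's inequality gives a deviation \(\sqrt{T\log(1/\beta)}\), which is at most \(T|\Delta_h|/8\) once \(T\ge C\Delta_h^{-2}\log(1/\beta)\).

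For the third term, use the Lipschitz property: \(|\Delta_h^{(n)}-\Delta_h|\le L_h\sup_x|\widehat F_n-F_P|\). By DKW this is at most \(L_h\sqrt{\log(2/\beta)/(2n)}\) with probability \(1-\beta/3\). The requirement \(n\ge CL_h^2r_{h,\beta}\) makes it at most \(|\Delta_h|/8\).

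The second bracket is the predictive mean. It equals
\[
\frac{n+1}{n+t}\,\bar h_n+\frac{1}{n+t}\sum_{s<t}h(\widehat F_n(X_s)),
\]
where \(\bar h_n=\frac1{n+1}\sum_j h_j=O(L_h/n)\) by a Riemann sum with \(\int h=0\). Its gap from \(\Delta_h^{(n)}\) is therefore controlled as follows. The term \(\Delta_h^{(n)}\cdot\frac{n+1}{n+t}\) is at most \(|\Delta_h|\,t/n\). The second piece, \(\frac{t-1}{n+t}\) times a running average, is bounded by \(t/n\). Summed over \(t\le T\), this is \(O(T^2/n)+O(TL_h/n)\). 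The requirement \(n\ge C\max\{1,L_h^2\}r_{h,\beta}\), with \(T\le C' r_{h,\beta}\), must make this at most \(T|\Delta_h|/8\).

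That last point is the main obstacle. The bound \(T^2/n\le T|\Delta_h|/8\) needs \(n\gtrsim T/|\Delta_h|\), which is stronger than \(n\gtrsim T\) by a factor \(1/|\Delta_h|\). The fix I would try first is not to bound the running average crudely by \(1\). Its drift is \(\frac{t-1}{n+t}\Delta_h^{(n)}\) plus a martingale fluctuation. The centered predictive mean \(\Delta_h^{(n)}-\sum_j\pi_{t,j}h_j\) is then approximately \(\Delta_h^{(n)}\,(n+1)/(n+t)\) plus a fluctuation of order \(\sqrt t/n\), matching the drift formula quoted before Lemma~\ref{lem:feature-drift}.

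With this refinement, the signal per step is \(\Delta_h\,(n+1)/(n+t)\ge\Delta_h/2\) whenever \(t\le n\), which holds for \(n\ge T\). The accumulated fluctuation \(\sum_t\sqrt t/n\lesssim T^{3/2}/n\) is at most \(T|\Delta_h|/8\) when \(n\gtrsim\sqrt T/|\Delta_h|\), and this is implied by \(n\gtrsim r_{h,\beta}\).

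A union bound over the three events, with the \(\beta\) split absorbed into the constant, gives the stated probability \(1-\beta\). The \(\lceil\cdot\rceil\) in the thresholds only affects constants.
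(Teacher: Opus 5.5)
Your refined argument, the ``drift plus fluctuation'' version, is essentially the paper's proof. The paper proves the exact identity $\mathbb{E}_Q(Z_t\mid D_0)=\frac{n+1}{n+t}\,d_{h,n}(D_0)$, with $d_{h,n}=m_Q-\bar h_n$ (Lemma~\ref{lem:feature-drift}), which is what your refinement recovers. It then controls $|d_{h,n}-\Delta_h|$ by DKW plus the $L_h/(n+1)$ grid error and uses $T\le n$ to get $\sum_{t\le T}(n+1)/(n+t)\ge T/2$. Next it concentrates conditionally on $D_0$, using independence of the ranks. It finishes with the explicit ONS regret bound \eqref{eq:ons-regret} against a fixed comparator of size of order $|\Delta_h|$, absorbing $\log T$ into $C$ as you describe. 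The one real difference is that the paper applies McDiarmid directly to the nonlinear comparator log-wealth $\sum_t\log(1+s\theta Z_t)$ with $\theta=|\Delta_h|/8$. You instead linearize first and concentrate $\sum_t Z_t$. Your crude first attempt, with the $T^2/n$ obstacle, is a detour that the exact drift identity avoids.

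Three details need repair.

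\begin{enumerate}
\item Your constants cancel. With $\log(1+x)\ge x-x^2$, $|Z_t|\le1$, $u=s|\Delta_h|/2$ (where $s=\operatorname{sign}(\Delta_h)$) and $s\sum_tZ_t\ge T|\Delta_h|/2$, the bound is $T\Delta_h^2/4-T\Delta_h^2/4=0$. The extra factor $\tfrac12$ you inserted would require $\log(1+x)\ge x-x^2/2$, which is false for every $x\in(-1,0)$. Taking $u=s|\Delta_h|/4$ gives $T\Delta_h^2/16$, and the rest goes through.

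\item Summing pointwise fluctuation bounds of order $\sqrt t/n$ over $t\le T$ needs a union bound or a maximal inequality, which you do not supply. It is cleaner to exchange the sums. Write $V_s=h\{(R_s-1)/n\}$, $m_Q=\mathbb{E}_Q(V_s\mid D_0)$ and $w_s=\sum_{t=s+1}^T(n+t)^{-1}$, so that $w_s\in[0,1]$ when $T\le n$. Then
\[
\sum_{t=1}^T\bigl\{Z_t-\mathbb{E}_Q(Z_t\mid D_0)\bigr\}=\sum_{s=1}^T(V_s-m_Q)(1-w_s).
\]
A single Hoeffding bound therefore gives deviation at most $\sqrt{T\log(1/\beta)/2}$, with no condition on $n$ beyond $T\le n$. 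This is the linear counterpart of the paper's bounded-difference constant $2\theta\{1+\sum_{t>j}(n+t)^{-1}\}\le4\theta$.

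\item Your displayed decomposition of $Z_t$ counts $\Delta_h^{(n)}$ twice. The second bracket should be $-\sum_j\pi_{t,j}h_j$.
\end{enumerate}
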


The theorem bounds the number of online observations needed for detection. Under its calibration-size condition, with probability at least \(1-\beta\),
\[
 \tau_h
 =O\!\left(
 |\Delta_h|^{-2}
 \log\frac{e}{\alpha\beta|\Delta_h|}
 \right).
\]

For delayed changes, Appendix~\ref{app:delayed-change} bounds the delay \(\tau_h-\nu\) after \(\nu\) null observations. Let \(r_{\nu,h,\beta}=|\Delta_h|^{-2}\log\{e(\nu+1)/(\alpha\beta|\Delta_h|)\}\). Under the corresponding calibration-size condition, with probability at least \(1-\alpha-\beta\), the procedure does not stop before the change and
\[
 \tau_h-\nu
 =O\!\left(r_{\nu,h,\beta}+\sqrt{\nu r_{\nu,h,\beta}}\right).
\]
In particular, setting \(\nu=0\) recovers the immediate-change bound.

Moreover, Theorem~\ref{thm:feature-ons-rate} implies consistency as \(n\to\infty\): the detection probability tends to one. For each \(n\), the calibration sample remains fixed throughout monitoring.

\begin{proposition}
\label{prop:growing-calibration-consistency}
Let \(h\) be a Lipschitz feature. For each \(n\), let \(M_{h,t}^{(n)}\) and \(\tau_h^{(n)}\) denote the e-process and stopping time based on a calibration sample of size \(n\). If \(\Delta_h\ne0\), then, for every deterministic sequence \(T_n\to\infty\),
\[
 \mathbb{P}_{P^n\otimes Q^\infty}(\tau_h^{(n)}\le T_n)\longrightarrow1.
\]
In particular, \(\mathbb{P}_{P^n\otimes Q^\infty}(\tau_h^{(n)}<\infty)\longrightarrow1\).
\end{proposition}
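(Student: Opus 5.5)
We derive Proposition~\ref{prop:growing-calibration-consistency} directly from Theorem~\ref{thm:feature-ons-rate}, using a \(\beta\) that shrinks slowly with \(n\).

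\textbf{Step 1: fix \(\beta\) and collect the bounds.} Fix \(\beta\in(0,1]\). Since \(P\), \(Q\), \(h\), \(\alpha\) and \(\Delta_h\ne0\) do not depend on \(n\), the quantity \(r_{h,\beta}=|\Delta_h|^{-2}\log\{e/(\alpha\beta|\Delta_h|)\}\) is a finite constant, independent of \(n\). Set
\[
K_\beta=\lceil C\max\{1,L_h^2\}r_{h,\beta}\rceil,\qquad
T^\ast_\beta=\lceil Cr_{h,\beta}\rceil,
\]
where \(C\) is the universal constant of Theorem~\ref{thm:feature-ons-rate}.

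\textbf{Step 2: apply the theorem for large \(n\).} Because \(T_n\to\infty\), there is \(n_\beta\) such that \(n\ge K_\beta\) and \(T_n\ge T^\ast_\beta\) for all \(n\ge n_\beta\). For each such \(n\), the calibration-size condition of Theorem~\ref{thm:feature-ons-rate} holds. The theorem then gives
\[
\mathbb{P}_{P^n\otimes Q^\infty}(\tau_h^{(n)}\le T_n)
\ge
\mathbb{P}_{P^n\otimes Q^\infty}(\tau_h^{(n)}\le T^\ast_\beta)
\ge 1-\beta .
\]
The first inequality uses monotonicity of the event in the horizon.

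\textbf{Step 3: send \(\beta\to0\).} Step 2 shows \(\liminf_{n}\mathbb{P}(\tau_h^{(n)}\le T_n)\ge1-\beta\) for every \(\beta\in(0,1]\). Letting \(\beta\downarrow0\) gives the limit \(1\). The final claim follows from \(\{\tau_h^{(n)}\le T_n\}\subseteq\{\tau_h^{(n)}<\infty\}\), applied with, say, \(T_n=n\).

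The only point to check is that the theorem is a genuinely non-asymptotic statement that can be invoked separately at each \(n\). It is: the process for sample size \(n\) uses a fixed calibration sample of size \(n\), and the constant \(C\) is universal, so no uniformity issue arises. With \(\beta\) fixed first, the required \(n\) and the required horizon are finite, so there is no real obstacle beyond this order of quantifiers. If one also wanted a rate, one could instead take \(\beta_n\to0\) with \(\log(1/\beta_n)=o(\min\{n,T_n\})\). The fixed-\(\beta\) argument above is sufficient for the proposition and avoids this bookkeeping.
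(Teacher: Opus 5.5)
Your proposal is correct and follows the paper's proof essentially verbatim: fix the failure probability (the paper calls it \(\eta\)), observe that both the horizon \(\lceil C r_{h,\eta}\rceil\) and the calibration-size threshold are independent of \(n\), apply Theorem~\ref{thm:feature-ons-rate} for all large \(n\) with \(T_n\) eventually exceeding that horizon, and let \(\eta\downarrow0\). The final claim then follows from the inclusion \(\{\tau_h^{(n)}\le T_n\}\subseteq\{\tau_h^{(n)}<\infty\}\), as you say.
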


Proposition~\ref{prop:growing-calibration-consistency} treats one feature. The next corollary extends the result to a finite portfolio when at least one positively weighted feature has a nonzero contrast.

\begin{corollary}
\label{cor:fixed-mixture-power}
Fix a finite collection of Lipschitz features \(h_1,\ldots,h_m\) and deterministic weights \(\omega_1,\ldots,\omega_m\) such that \(\omega_\ell\ge0\) and \(\sum_{\ell=1}^m\omega_\ell=1\). Define \(M_{{\rm mix},t}^{(n)}=\sum_{\ell=1}^m\omega_\ell M_{h_\ell,t}^{(n)}\) and \(\tau_{\rm mix}^{(n)}=\inf\{t\ge1:M_{{\rm mix},t}^{(n)}\ge1/\alpha\}\). If, for some \(\ell\), \(\omega_\ell>0\) and \(\Delta_{h_\ell}(P,Q)\ne0\), then, for every deterministic sequence \(T_n\to\infty\),
\[
 \mathbb{P}_{P^n\otimes Q^\infty}(\tau_{\rm mix}^{(n)}\le T_n)
 \longrightarrow1.
\]
\end{corollary}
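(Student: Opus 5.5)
The corollary follows from Proposition~\ref{prop:growing-calibration-consistency} applied to the one component that carries signal, together with the domination $M^{(n)}_{{\rm mix},t}\ge\omega_\ell M^{(n)}_{h_\ell,t}$. That inequality holds because every component wealth process is nonnegative.

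Fix an index $\ell$ with $\omega_\ell>0$ and $\Delta_{h_\ell}(P,Q)\ne0$. The argument has three steps.

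\emph{Step 1: apply the single-feature result at a smaller level.} Let $\alpha'=\alpha\omega_\ell\in(0,\alpha]$. Run Algorithm~\ref{alg:feature-rank-ons} with feature $h_\ell$ and threshold $1/\alpha'$, and call its stopping time $\tau'^{(n)}$. This is the same wealth process $M^{(n)}_{h_\ell,t}$ as in the portfolio. The ONS updates do not depend on $\alpha$, so only the threshold changes. Proposition~\ref{prop:growing-calibration-consistency} holds for every fixed level in $(0,1)$. If one prefers to go through Theorem~\ref{thm:feature-ons-rate} directly, note that $r_{h_\ell,\beta}$ depends on the level only through $\log\{1/(\alpha'\beta)\}$, which is finite and fixed in $n$. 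Either way,
\[
\mathbb{P}_{P^n\otimes Q^\infty}\bigl(\tau'^{(n)}\le T_n\bigr)\to1 \quad\text{for every deterministic } T_n\to\infty.
\]

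\emph{Step 2: transfer to the portfolio.} On the event $\{\tau'^{(n)}\le T_n\}$, at time $t=\tau'^{(n)}$ we have
\[
M^{(n)}_{{\rm mix},t}\ge\omega_\ell M^{(n)}_{h_\ell,t}\ge\omega_\ell/\alpha'=1/\alpha.
\]
Hence $\tau^{(n)}_{\rm mix}\le\tau'^{(n)}\le T_n$, and the claimed convergence follows.

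\emph{Step 3: check that stopping does not interfere.} The component processes are defined for all $t$, and the portfolio's stopping time is simply the first crossing of the mixture. So we must not let the $h_\ell$ process stop at level $1/\alpha$ inside the portfolio. Interpret $M_{h_\ell,t}$ as the unstopped wealth process and run the argument through Step 1 as stated.

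The only point that needs care is Step 1. We must confirm that Proposition~\ref{prop:growing-calibration-consistency}, or Theorem~\ref{thm:feature-ons-rate} underlying it, applies at the rescaled level $\alpha'$. This is clear because those results are proved for an arbitrary $\alpha\in(0,1)$ and the constants stay uniform in $n$. Otherwise the proof is immediate.
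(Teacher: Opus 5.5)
Your proposal is correct and follows essentially the same route as the paper: apply Proposition~\ref{prop:growing-calibration-consistency} to the signal-carrying feature \(h_\ell\) at level \(\alpha\omega_\ell\), then use \(M_{{\rm mix},t}^{(n)}\ge\omega_\ell M_{h_\ell,t}^{(n)}\) so that a component crossing at \(1/(\alpha\omega_\ell)\) forces the mixture to cross \(1/\alpha\). Your Step~3 remark, that the component wealth must be read as the unstopped ONS process, is a point the paper leaves implicit here and states explicitly only in the delayed-change proof.
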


The PRM portfolio is the special case with the order and dispersion features assigned equal weights. It has asymptotic power one whenever either contrast is nonzero. Appendix~\ref{app:delayed-change} extends Proposition~\ref{prop:growing-calibration-consistency} and Corollary~\ref{cor:fixed-mixture-power} to delayed changes.

\subsection{Limits of fixed calibration}

The positive results in Section~\ref{sec:power} use marginal validity, whereas \citet{shaer2026testing} give a PAC calibration-conditional guarantee. We first show that requiring calibration-conditional validity for every null distribution rules out nontrivial power. We then show that, even under the marginal validity requirement, a fixed calibration size limits attainable power.

\begin{theorem}
\label{thm:conditional-no-free-lunch}
Suppose a sequential procedure satisfies calibration-conditional validity at level \(\alpha\) for every distribution \(P\) on a standard Borel space:
\[
  \mathbb{P}_{H_0(P)}(\tau<\infty\mid D_0)\le \alpha
  \qquad P^n\text{-almost surely}.
\]
Then, for every fixed calibration sample \(D_0\) and every stream distribution \(Q\),
\[
  \mathbb{P}_{X_1,X_2,\ldots\sim Q^\infty}
  \bigl(\tau<\infty\mid D_0\bigr)\le \alpha .
\]
\end{theorem}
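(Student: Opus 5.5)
The idea is to exploit that the hypothesis quantifies over \emph{every} null distribution \(P\), including distributions with atoms. Fix a calibration sample \(D_0=(y_1,\ldots,y_n)\) and a stream distribution \(Q\). We build null distributions \(P_\varepsilon\) under which this particular \(D_0\) has positive probability, yet the stream law is arbitrarily close to \(Q\) on any finite horizon. Throughout, \(\tau\) is allowed to use auxiliary randomization (for example the tie-breaking variables \(W_i,W_t'\)) that is independent of the data; conditioning on \(D_0\) leaves this randomization untouched.

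\emph{Step 1 (construction of the null).} For \(\varepsilon\in(0,1)\) set
\[
P_\varepsilon=\varepsilon\,U_{D_0}+(1-\varepsilon)Q,\qquad U_{D_0}=\frac1n\sum_{i=1}^n\delta_{y_i}.
\]
Singletons are measurable because the space is standard Borel, so \(P_\varepsilon\) is a probability measure. Under \(P_\varepsilon^n\), the event that the calibration sample equals \(D_0\) has probability at least \((\varepsilon/n)^n>0\). It is therefore not a null set. Hence the almost-sure hypothesis applied with \(P=P_\varepsilon\) forces \(\mathbb P_{H_0(P_\varepsilon)}(\tau<\infty\mid D_0)\le\alpha\) at this specific \(D_0\). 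The conditional probability is unambiguously defined there, since it is an elementary conditional probability given an atom.

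\emph{Step 2 (conditional law of the stream).} Under \(H_0(P_\varepsilon)\) the stream is independent of \(D_0\). Conditionally on \(D_0\), the stream is therefore \(P_\varepsilon^\infty\), and Step 1 gives
\[
P_\varepsilon^\infty\bigl(\tau<\infty\mid D_0\bigr)\le\alpha,
\]
where \(\tau\) is now viewed as a (randomized) stopping rule with \(D_0\) frozen.

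\emph{Step 3 (transfer to \(Q\)).} For fixed \(T\), the event \(\{\tau\le T\}\) depends only on \(X_1,\ldots,X_T\) and on the independent randomization. We have \(\|P_\varepsilon-Q\|_{\rm TV}\le\varepsilon\), and total variation is subadditive over product coordinates, so \(\|P_\varepsilon^{T}-Q^{T}\|_{\rm TV}\le T\varepsilon\). Consequently
\[
Q^\infty(\tau\le T\mid D_0)\le P_\varepsilon^\infty(\tau\le T\mid D_0)+T\varepsilon\le\alpha+T\varepsilon .
\]
Letting \(\varepsilon\to0\) with \(T\) fixed gives \(Q^\infty(\tau\le T\mid D_0)\le\alpha\). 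Letting \(T\to\infty\) and using continuity from below yields \(Q^\infty(\tau<\infty\mid D_0)\le\alpha\).

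The main obstacle is Step 1: converting an almost-sure statement about conditional probabilities into a statement at a prescribed point \(D_0\). A generic continuous \(P\) gives no information at a fixed point, because every single \(D_0\) is a null set. The atomic mixture resolves this by giving \(D_0\) positive mass, which makes the conditional probability well defined there. The same mixture keeps the stream within total-variation distance \(T\varepsilon\) of \(Q^T\) on any finite horizon, which is what Step 3 needs.
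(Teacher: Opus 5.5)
Your proposal is correct and follows essentially the same argument as the paper. Both use the contaminated null $P_\varepsilon=(1-\varepsilon)Q+\varepsilon\cdot(\text{empirical law of } D_0)$ to give the fixed calibration sample positive mass, both transfer the bound to $Q$ on each finite horizon $T$ via total variation, and both let $T\to\infty$ by continuity from below. Your explicit bound $T\varepsilon$ simply makes quantitative the paper's statement that $P_\varepsilon^T\otimes\nu\to Q^T\otimes\nu$ in total variation.
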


Thus calibration-conditional validity leaves no distribution-free power beyond level \(\alpha\). Our method requires only marginal validity and therefore avoids this impossibility. However, distribution-free marginal validity still entails an information limit.

\begin{theorem}
\label{thm:tv-bound}
Let a sequential test satisfy \(\sup_R\mathbb{P}_{H_0(R)}(\tau<\infty)\le\alpha\). Let \(\mathrm{TV}\) denote total variation distance. Then, for any \(P,Q\),
\[
  \mathbb{P}_{P^n\otimes Q^\infty}(\tau<\infty)
  \le \beta_\alpha(P^n,Q^n)
  \le \alpha+\mathrm{TV}(P^n,Q^n),
\]
where \(\beta_\alpha(P^n,Q^n):= \sup_{0\le\varphi\le1:\,\mathbb{E}_{Q^n}\varphi\le\alpha} \mathbb{E}_{P^n}\varphi\).
\end{theorem}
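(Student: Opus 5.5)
Fix \(P,Q\) and consider the composite null distribution \(R=Q\). The idea is to turn the sequential procedure into a test of \(P^n\) versus \(Q^n\) that is applied to the calibration sample alone. Define
\[
 \varphi(y_1,\ldots,y_n)
 =\mathbb{P}_{X_1,X_2,\ldots\sim Q^\infty}\bigl(\tau<\infty\mid D_0=(y_1,\ldots,y_n)\bigr),
\]
where the probability also integrates over any auxiliary randomization, including the tie-breaking uniforms. Because \(\tau\) is a measurable function of \(D_0\), the online stream, and independent randomization, this \(\varphi\) is a measurable function taking values in \([0,1]\). The stream distribution \(Q^\infty\) is fixed and does not depend on the calibration law, so \(\varphi\) is a legitimate (randomized) test statistic on the sample space of \(D_0\).

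The two key identities follow from Fubini's theorem, since calibration and stream are independent in both product measures. First,
\[
 \mathbb{E}_{Q^n}\varphi=\mathbb{P}_{Q^n\otimes Q^\infty}(\tau<\infty)=\mathbb{P}_{H_0(Q)}(\tau<\infty)\le\alpha .
\]
This uses the hypothesis \(\sup_R\mathbb{P}_{H_0(R)}(\tau<\infty)\le\alpha\) at \(R=Q\): under \(H_0(Q)\) every observation, calibration and online alike, is i.i.d. from \(Q\). Second,
\[
 \mathbb{E}_{P^n}\varphi=\mathbb{P}_{P^n\otimes Q^\infty}(\tau<\infty).
\]
Together these show that \(\varphi\) is feasible in the definition of \(\beta_\alpha(P^n,Q^n)\), which gives the first inequality.

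For the second inequality, take any \(\varphi\) with \(0\le\varphi\le1\) and \(\mathbb{E}_{Q^n}\varphi\le\alpha\). Then
\[
 \mathbb{E}_{P^n}\varphi\le\mathbb{E}_{Q^n}\varphi+\sup_{0\le f\le1}\bigl|\mathbb{E}_{P^n}f-\mathbb{E}_{Q^n}f\bigr|\le\alpha+\mathrm{TV}(P^n,Q^n),
\]
because the supremum over \([0,1]\)-valued measurable functions equals the total variation distance. Taking the supremum over all feasible \(\varphi\) completes the proof.

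The only step needing care is measurability together with the Fubini disintegration. The conditional probability defining \(\varphi\) must be a genuine measurable function of \(D_0\) and not just an almost-sure object, and this conditional probability must not change when the calibration law is switched from \(Q^n\) to \(P^n\). Both hold because the stream and the internal randomization are independent of \(D_0\) with a fixed law. I will therefore define \(\varphi\) directly as a section integral of the indicator \(\mathbf 1\{\tau<\infty\}\) over the product space rather than as a conditional expectation. With that definition, joint measurability of the indicator is enough.
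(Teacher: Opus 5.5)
Your proposal is correct and follows the paper's proof essentially step for step: it uses the same test function \(\varphi(d)=\mathbb{P}_{X_{1:\infty}\sim Q^\infty,U}\{\tau(d,X_{1:\infty},U)<\infty\}\), feasibility from validity at \(H_0(Q)\), the identification of \(\mathbb{E}_{P^n}\varphi\) with the power, and the definition of total variation to bound \(\beta_\alpha\) by \(\alpha+\mathrm{TV}(P^n,Q^n)\). Your remark about defining \(\varphi\) as a section integral, rather than as an almost-sure conditional probability, is a harmless refinement that the paper leaves implicit.
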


To interpret the bound, compare the alternative \(D_0\sim P^n, X_{1:\infty}\sim Q^\infty\) with the null under which both the calibration sample and stream follow \(Q\). The stream has the same law in both experiments; only the calibration sample differs. An infinite stream can reveal \(Q\), but it supplies no additional observations from \(P\). Power is therefore bounded by the best level-\(\alpha\) test of \(Q^n\) against \(P^n\) based on the calibration sample alone.

\begin{remark}
\label{rem:gaussian-power-envelope}
The TV bound can be vacuous when \(\alpha+\mathrm{TV}(P^n,Q^n)\ge1\). Yet even for a basic Gaussian location shift, no procedure with distribution-free type I error control can attain power one. Specifically, for \(P=\mathcal N(0,1)\) and \(Q=\mathcal N(\mu,1)\), every distribution-free procedure satisfies
\[
 \mathbb{P}_{P^n\otimes Q^\infty}(\tau<\infty)
 \le \Phi\!\left(\sqrt n|\mu|-\Phi^{-1}(1-\alpha)\right)<1.
\]
\end{remark}

At a fixed calibration size, unlimited monitoring does not in general yield power one.

\section{Synthetic Experiments}
\label{sec:numerical}

We compare five procedures. Order PRM uses Algorithm~\ref{alg:feature-rank-ons} with \(h(u)=u-1/2\), and Dispersion PRM uses the same algorithm with \(h(u)=|2u-1|-1/2\). The PRM portfolio is the equal-weight average of these two martingales. CCTM and Standard CTM are from \citet{shaer2026testing}. Appendix~\ref{app:cctm-standard-ctm} gives implementation details for these two methods.

We conduct two synthetic studies. The first reruns the nine settings considered by \citet{shaer2026testing}, covering immediate, delayed, and gradual location shifts. The second considers symmetric shifts with zero order contrast, as discussed in Section~\ref{sec:feature-targets}. All experiments use \(\alpha=0.05\); Figure~\ref{fig:rank-ons-type1-sweep} in Appendix~\ref{app:experiments} reports the type I error results.

\subsection{Gaussian location shifts}
We rerun the nine Gaussian settings considered by \citet{shaer2026testing}. In every setting, the calibration set consists of \(n=1000\) i.i.d. observations from \(\mathcal N(0,1)\). The immediate settings use \(X_t\sim\mathcal N(d,1)\) from \(t=1\), with \(d\in\{1,1.5,2\}\). The delayed settings use \(X_t\sim\mathcal N(0,1)\) for \(t<t_0\) and \(X_t\sim\mathcal N(2,1)\) for \(t\ge t_0\), where \(t_0\in\{200,600,4000\}\). The gradual settings use \(X_t\sim\mathcal N(\lambda t,1)\), \(t=1,\ldots,100\), with \(\lambda\in\{0.015,0.03,0.05\}\). We compare Order PRM, the PRM portfolio, CCTM, and Standard CTM over 1000 repetitions. Figure~\ref{fig:order-location-detection} shows one setting from each regime; Appendix~\ref{app:cctm-synthetic-suite} reports all nine.

\begin{figure}[htbp]
\centering
\includegraphics[width=0.9\textwidth]{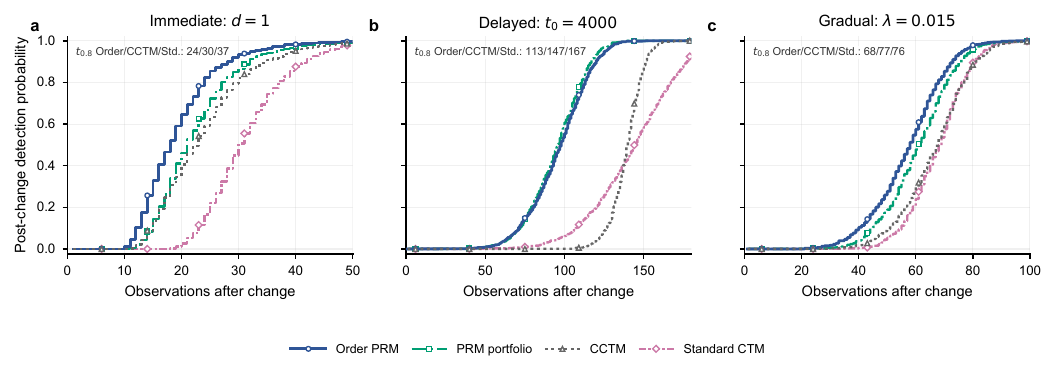}
\caption{Detection under immediate, delayed, and gradual location shifts from \citet{shaer2026testing}.}
\label{fig:order-location-detection}
\end{figure}

Across all nine settings, Order PRM reaches \(80\%\) detection probability earlier than both CCTM and Standard CTM. Relative to CCTM, it requires \(10.2\%\)--\(23.1\%\) fewer observations; relative to Standard CTM, it requires \(10.5\%\)--\(42.1\%\) fewer. The PRM portfolio remains close to Order PRM despite splitting its initial wealth.

\subsection{Symmetric shifts with zero order contrast}

We next consider three alternatives with zero order contrast. In each repetition, the calibration set contains \(n=1000\) i.i.d. observations from \(\mathcal N(0,1)\), and the online observations are i.i.d. from the alternative starting at \(t=1\). The alternatives are \(\mathcal N(0,1.5^2)\), \(\mathsf{Laplace}(0,1/\sqrt{2})\), and \(t_3/\sqrt{3}\).

\begin{figure}[ht]
\centering
\includegraphics[width=0.9\textwidth]{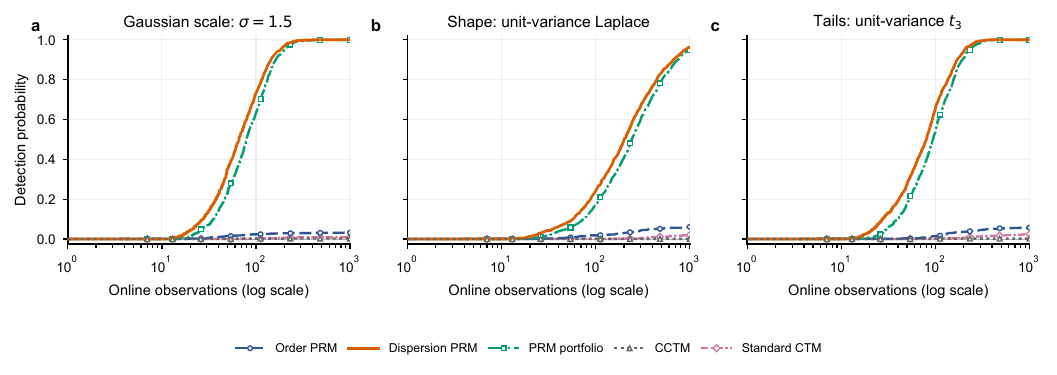}
\caption{Detection under three symmetric alternatives with zero order contrast.}
\label{fig:feature-detection}
\end{figure}

At \(T=1000\), Order PRM, CCTM, and Standard CTM have near-zero detection probability across all alternatives, whereas Dispersion PRM and PRM portfolio achieve near-one detection probability. Thus a feature with nonzero contrast recovers detection ability when the order contrast is zero.
\section{Real Data: CIFAR-10-C}

Finally, we evaluate Order PRM, the PRM portfolio, CCTM, and Standard CTM using CIFAR-10 and CIFAR-10-C~\citep{krizhevsky2009learningML,hendrycks2019benchmarking}. CIFAR-10-C is constructed by applying 15 corruption types, each at five severity levels, to the CIFAR-10 images. For each image, we use the Shannon entropy of the softmax probabilities produced by a publicly available \mbox{ResNet-20} pretrained on CIFAR-10~\citep{he2016deep} as the scalar monitoring score. For each \(n\in\{20,30,50\}\), we use \(n\) clean images for calibration and a disjoint \(5{,}000\)-image corrupted stream. We consider all 15 severity-5 corruptions and 10 random index splits, yielding 150 trials per calibration size. Results for severity levels 1--4 are reported in Appendix~\ref{app:cifar10-c-additional-severities}.

\begin{figure}[htbp]
\centering
\includegraphics[width=0.9\textwidth]{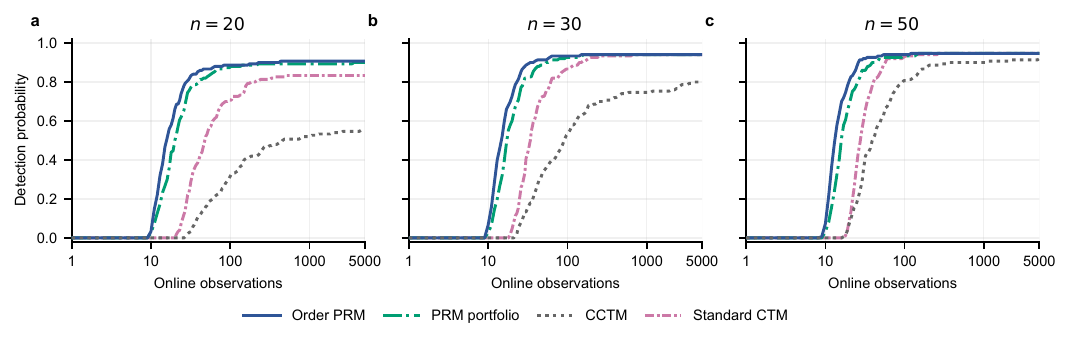}
\caption{Detection of CIFAR-10-C corruptions at severity 5 for \(n=20,30,50\).}
\label{fig:cifar10-c-small-n}
\end{figure}

Order PRM reaches \(80\%\) detection after \(28\), \(23\), and \(20\) observations for \(n=20,30,50\), respectively. The PRM portfolio requires \(40\), \(28\), and \(26\) observations, and Standard CTM requires \(176\), \(61\), and \(40\). CCTM does not reach \(80\%\) detection within \(T=5{,}000\) for \(n=20\), and requires \(3{,}757\) and \(94\) observations for \(n=30\) and \(50\). Order PRM and the portfolio reach \(80\%\) detection earlier than both baselines for all three calibration sizes.

\section{Conclusion}

We derive the exact predictive law of fixed-reference ranks and use it to construct PRMs with anytime type I error control. Pre-specified features target different shifts, while ONS adapts the bet. For Lipschitz features with nonzero contrast, we prove finite-window detection guarantees and consistency as the initial calibration size increases.
Empirically, Order PRM detects Gaussian shifts and CIFAR-10-C earlier than CCTM, while Dispersion PRM detects symmetric shifts on which Order PRM and CCTM have near-zero power. These gains coexist with fundamental limits: distribution-free calibration-conditional validity permits only trivial power, and a fixed calibration size limits marginally valid procedures.

\begingroup
\footnotesize
\setlength{\bibsep}{2pt}
\bibliographystyle{plainnat}
\bibliography{references}

@inproceedings{shaer2026testing,
title={Testing For Distribution Shifts with Conditional Conformal Test Martingales},
author={Shalev Shaer and Yarin Bar and Drew Prinster and Yaniv Romano},
booktitle={Forty-third International Conference on Machine Learning},
year={2026},
url={https://openreview.net/forum?id=B8sTBdGvqG}
}

@article{bates2023,
author = {Stephen Bates and Emmanuel Cand{\`e}s and Lihua Lei and Yaniv Romano and Matteo Sesia},
title = {{Testing for outliers with conformal p-values}},
volume = {51},
journal = {The Annals of Statistics},
number = {1},
publisher = {Institute of Mathematical Statistics},
pages = {149 -- 178},
year = {2023},
doi = {10.1214/22-AOS2244},
URL = {https://doi.org/10.1214/22-AOS2244}
}

@inproceedings{volkhonskiy2017inductive,
  title={Inductive conformal martingales for change-point detection},
  author={Volkhonskiy, Denis and Burnaev, Evgeny and Nouretdinov, Ilia and Gammerman, Alexander and Vovk, Vladimir},
  booktitle={Conformal and Probabilistic Prediction and Applications},
  pages={132--153},
  year={2017},
  organization={PMLR}
}

@article{vovk2025conformal,
  title={Conformal e-testing},
  author={Vovk, Vladimir and Nouretdinov, Ilia and Gammerman, Alex},
  journal={Pattern Recognition},
  volume={168},
  pages={111841},
  year={2025},
  publisher={Elsevier}
}

@article{wolfe1977class,
  title={On a class of partially sequential two-sample test procedures},
  author={Wolfe, Douglas A},
  journal={Journal of the American Statistical Association},
  volume={72},
  number={357},
  pages={202--205},
  year={1977},
  publisher={Taylor \& Francis}
}

@article{orban1980distribution,
  title={Distribution-free partially sequential piacment procedures},
  author={Orban, John and Wolfe, Douglas A},
  journal={Communications in Statistics-Theory and Methods},
  volume={9},
  number={9},
  pages={883--904},
  year={1980},
  publisher={Taylor \& Francis}
}

@article{orban1982class,
  title={A class of distribution-free two-sample tests based on placements},
  author={Orban, John and Wolfe, Douglas A},
  journal={Journal of the American Statistical Association},
  volume={77},
  number={379},
  pages={666--672},
  year={1982},
  publisher={Taylor \& Francis}
}

@inproceedings{he2016deep,
  title={Deep residual learning for image recognition},
  author={He, Kaiming and Zhang, Xiangyu and Ren, Shaoqing and Sun, Jian},
  booktitle={Proceedings of the IEEE conference on computer vision and pattern recognition},
  pages={770--778},
  year={2016}
}

@article{hendrycks2019benchmarking,
  title={Benchmarking neural network robustness to common corruptions and perturbations},
  author={Hendrycks, Dan and Dietterich, Thomas},
  journal={arXiv preprint arXiv:1903.12261},
  year={2019}
}

@article{hazan2007logarithmic,
  title={Logarithmic regret algorithms for online convex optimization},
  author={Hazan, Elad and Agarwal, Amit and Kale, Satyen},
  journal={Machine Learning},
  volume={69},
  number={2},
  pages={169--192},
  year={2007},
  publisher={Springer}
}

@inproceedings{vovk2003testing,
  title={Testing exchangeability on-line},
  author={Vovk, Vladimir and Nouretdinov, Ilia and Gammerman, Alexander},
  booktitle={Proceedings of the 20th international conference on machine learning (ICML-03)},
  pages={768--775},
  year={2003}
}

@book{ville1939etude,
  title={Etude critique de la notion de collectif},
  author={Ville, Jean},
  volume={3},
  year={1939},
  publisher={Gauthier-Villars Paris}
}

@article{blackwell1973ferguson,
  title={Ferguson distributions via P{\'o}lya urn schemes},
  author={Blackwell, David and MacQueen, James B},
  journal={The annals of statistics},
  volume={1},
  number={2},
  pages={353--355},
  year={1973},
  publisher={Institute of Mathematical Statistics}
}

@article{10.1111/rssa.12647,
    author = {Shafer, Glenn},
    title = {Testing by Betting: A Strategy for Statistical and Scientific Communication},
    journal = {Journal of the Royal Statistical Society Series A: Statistics in Society},
    volume = {184},
    number = {2},
    pages = {407-431},
    year = {2021},
    month = {04},
    issn = {0964-1998},
    doi = {10.1111/rssa.12647},
    url = {https://doi.org/10.1111/rssa.12647},
    eprint = {https://academic.oup.com/jrsssa/article-pdf/184/2/407/49325712/jrsssa_184_2_407.pdf},
}

@article{fedorova2012plug,
  title={Plug-in martingales for testing exchangeability on-line},
  author={Fedorova, Valentina and Gammerman, Alex and Nouretdinov, Ilia and Vovk, Vladimir},
  journal={arXiv preprint arXiv:1204.3251},
  year={2012}
}

@inproceedings{vovk2021retrain,
  title={Retrain or not retrain: Conformal test martingales for change-point detection},
  author={Vovk, Vladimir and Petej, Ivan and Nouretdinov, Ilia and Ahlberg, Ernst and Carlsson, Lars and Gammerman, Alex},
  booktitle={Conformal and Probabilistic Prediction and Applications},
  pages={191--210},
  year={2021},
  organization={PMLR}
}

@book{vovk2005algorithmic,
  title={Algorithmic learning in a random world},
  author={Vovk, Vladimir and Gammerman, Alexander and Shafer, Glenn},
  year={2005},
  publisher={Springer}
}

@article{ansari1960rank,
  title={Rank-sum tests for dispersions},
  author={Ansari, Abdur Rahman and Bradley, Ralph A},
  journal={The annals of mathematical statistics},
  pages={1174--1189},
  year={1960},
  publisher={JSTOR}
}

@article{ramdas2022testing,
  title={Testing exchangeability: Fork-convexity, supermartingales and e-processes},
  author={Ramdas, Aaditya and Ruf, Johannes and Larsson, Martin and Koolen, Wouter M},
  journal={International Journal of Approximate Reasoning},
  volume={141},
  pages={83--109},
  year={2022},
  publisher={Elsevier}
}

@inproceedings{saha2024testing,
  title={Testing exchangeability by pairwise betting},
  author={Saha, Aytijhya and Ramdas, Aaditya},
  booktitle={International Conference on Artificial Intelligence and Statistics},
  pages={4915--4923},
  year={2024},
  organization={PMLR}
}

@article{podkopaev2023sequential,
  title={Sequential predictive two-sample and independence testing},
  author={Podkopaev, Aleksandr and Ramdas, Aaditya},
  journal={Advances in neural information processing systems},
  volume={36},
  pages={53275--53307},
  year={2023}
}

@article{benjamini1995controlling,
  title={Controlling the false discovery rate: a practical and powerful approach to multiple testing},
  author={Benjamini, Yoav and Hochberg, Yosef},
  journal={Journal of the Royal statistical society: series B (Methodological)},
  volume={57},
  number={1},
  pages={289--300},
  year={1995},
  publisher={Wiley Online Library}
}

@article{benjamini2001control,
  title={The control of the false discovery rate in multiple testing under dependency},
  author={Benjamini, Yoav and Yekutieli, Daniel},
  journal={Annals of statistics},
  pages={1165--1188},
  year={2001},
  publisher={JSTOR}
}

@inproceedings{Krizhevsky2009LearningML,
  title={Learning Multiple Layers of Features from Tiny Images},
  author={Alex Krizhevsky},
  year={2009},
  url={https://api.semanticscholar.org/CorpusID:18268744}
}
\endgroup

\newpage
\appendix

The appendix is organized as follows. Appendix~\ref{app:proofs} proves the theorems, propositions, and corollaries stated in the main text and extends the detection guarantees to delayed changes. Appendix~\ref{app:cctm-validity} examines the CCTM type I error guarantee and gives continuous and discrete counterexamples. Appendix~\ref{app:experiments} provides the baseline implementations, simulation settings, and CIFAR-10-C protocol.

\section{Proofs of Main-Text Results}
\label{app:proofs}

\subsection{Proof of Theorem~\ref{thm:rank-eprocess}}

\begin{proof}
The substantive part of the theorem is the predictive rank law
\eqref{eq:fixed-rank-law}. Once this law is established, the martingale
property and the anytime bound follow by standard arguments.

Suppose first that \(P\) is continuous, and define \(U_i=F_P(Y_i)\) and
\(V_t=F_P(X_t)\). Under \(H_0(P)\),
\[
 U_1,\ldots,U_n,V_1,V_2,\ldots
 \overset{\mathrm{i.i.d.}}{\sim}\operatorname{Unif}(0,1).
\]
Since a monotone transformation preserves ranks, the rank of \(X_t\) among
\(Y_1,\ldots,Y_n\) equals the rank of \(V_t\) among
\(U_1,\ldots,U_n\). Let \(U_{(1)}<\cdots<U_{(n)}\) be the order
statistics of the calibration uniforms, and set \(U_{(0)}=0\) and
\(U_{(n+1)}=1\). These points partition \([0,1]\) into \(n+1\) intervals
with lengths \(W_j=U_{(j)}-U_{(j-1)}\), \(j=1,\ldots,n+1\). Up to events
of probability zero,
\[
 R_t=j
 \quad\Longleftrightarrow\quad
 V_t\in\bigl(U_{(j-1)},U_{(j)}\bigr).
\]
Thus, conditional on \(W=(W_1,\ldots,W_{n+1})\), we have
\(\mathbb{P}(R_t=j\mid W)=W_j\). The ranks are conditionally independent
because \(V_1,V_2,\ldots\) are independent, and hence
\(R_t\mid W\overset{\mathrm{i.i.d.}}{\sim}
\operatorname{Categorical}(W_1,\ldots,W_{n+1})\). Let
\(N_{t-1,j}=\sum_{s=1}^{t-1}\mathbf{1}\{R_s=j\}\). The conditional
likelihood of the observed rank history is
\[
 \mathbb{P}(R_1,\ldots,R_{t-1}\mid W)
 =\prod_{j=1}^{n+1}W_j^{N_{t-1,j}}.
\]

The spacings of uniform order statistics satisfy
\((W_1,\ldots,W_{n+1})\sim\operatorname{Dirichlet}(1,\ldots,1)\).
Dirichlet conjugacy therefore gives
\[
 W\mid\mathcal{G}_{t-1}
 \sim
 \operatorname{Dirichlet}
 \left(1+N_{t-1,1},\ldots,1+N_{t-1,n+1}\right),
\]
and hence \(\mathbb{E}(W_j\mid\mathcal{G}_{t-1})
=(1+N_{t-1,j})/\sum_{\ell=1}^{n+1}(1+N_{t-1,\ell})
=(1+N_{t-1,j})/(n+t)\).
Finally, the tower property yields
\[
\begin{aligned}
 \mathbb{P}(R_t=j\mid\mathcal{G}_{t-1})
 &=
 \mathbb{E}\!\left[
 \mathbb{P}(R_t=j\mid W,\mathcal{G}_{t-1})
 \mid\mathcal{G}_{t-1}
 \right]\\
 &=\mathbb{E}(W_j\mid\mathcal{G}_{t-1})\\
 &=\frac{1+N_{t-1,j}}{n+t},
\end{aligned}
\]
which proves \eqref{eq:fixed-rank-law}. For atomic \(P\), the same argument
applies to the lexicographically ordered pairs used for tie breaking.

Since \(q_t\) is \(\mathcal{G}_{t-1}\)-measurable,
\[
  \mathbb{E}[E_t\mid \mathcal{G}_{t-1}]
  =
  \sum_{j=1}^{n+1}
  \frac{q_{t,j}}{\pi_{t,j}}
  \pi_{t,j}
  =
  1 .
\]
Thus \(M_t=\prod_{s=1}^tE_s\) is a nonnegative martingale with \(M_0=1\). Ville's inequality gives
\[
  \mathbb{P}\!\left(\sup_{t\ge0}M_t\ge 1/\alpha\right)\le \alpha .
\]
\end{proof}

\subsection{Proof of Proposition~\ref{prop:feature-directed}}

\begin{proof}
The range condition gives \(|Z_t|\le1\), so
\(E_t=1+\lambda_tZ_t\ge0\). Predictive centering gives
\[
 \mathbb{E}(E_t\mid\mathcal{G}_{t-1})
 =1+\lambda_t\mathbb{E}(Z_t\mid\mathcal{G}_{t-1})
 =1.
\]
Thus \((M_t)_{t\ge0}\) is a nonnegative martingale and hence an
e-process.
\end{proof}

\subsection{Proof of Proposition~\ref{prop:ons-regret}}
\label{app:ons-oracle}

\begin{proof}
Write \(g_t=Z_t/(1+\lambda_tZ_t)\) and
\(A_t=1+\sum_{s=1}^t g_s^2\).
The predictable domain gives \(|\lambda_tZ_t|\le3/4\). Hence
\(1+\lambda_tZ_t\ge1/4\),
\[
 |g_t|\le4,
 \qquad
 A_t\le1+16t.
\]

Fix \(u\) such that \(|u|\le b_t\). Since \(|Z_t|\le B_t\), we also have
\(|uZ_t|\le3/4\). Therefore
\[
 \frac{1+uZ_t}{1+\lambda_tZ_t}
 =1+x_t,
 \qquad x_t=g_t(u-\lambda_t).
\]
The preceding bounds imply \(x_t\in[-6/7,6]\). On this interval,
\(\log(1+x)\le x-x^2/9\). Consequently,
\begin{equation}
 \log(1+uZ_t)-\log(1+\lambda_tZ_t)
 \le g_t(u-\lambda_t)
 -\frac1{9}g_t^2(u-\lambda_t)^2.
 \label{eq:ons-exp-concave-step}
\end{equation}
If \(|u|\le b_{t+1}\) as well, projection in
\eqref{eq:practical-ons-main} does not increase the distance to \(u\), and
\[
\begin{aligned}
 (\lambda_{t+1}-u)^2
 &\le
 \left(\lambda_t+\frac{9g_t}{2A_t}-u\right)^2\\
 &=
 (\lambda_t-u)^2
 +\frac{9g_t}{A_t}(\lambda_t-u)
 +\frac{81g_t^2}{4A_t^2}.
\end{aligned}
\]
Rearranging this inequality gives
\[
 g_t(u-\lambda_t)
 \le \frac1{9}\left\{
 A_t(\lambda_t-u)^2-A_t(\lambda_{t+1}-u)^2
 \right\}+\frac{9g_t^2}{4A_t}.
\]
Substituting this bound into \eqref{eq:ons-exp-concave-step} and using
\(A_t=A_{t-1}+g_t^2\) gives
\[
\begin{aligned}
 &\log(1+uZ_t)-\log(1+\lambda_tZ_t)\\
 &\le
 \frac1{9}\left\{
 (A_t-g_t^2)(\lambda_t-u)^2
 -A_t(\lambda_{t+1}-u)^2
 \right\}
 +\frac{9g_t^2}{4A_t}\\
 &=
 \frac1{9}\left\{
 A_{t-1}(\lambda_t-u)^2
 -A_t(\lambda_{t+1}-u)^2
 \right\}
 +\frac{9g_t^2}{4A_t}.
\end{aligned}
\]

Fix \(T\ge1\), let
\(\underline b_T=\min_{1\le t\le T}b_t\), and take
\(|u|\le\underline b_T\). For \(t=1,\ldots,T-1\), we have
\(|u|\le b_t\) and \(|u|\le b_{t+1}\), so the preceding inequality
telescopes to
\[
\begin{aligned}
 &\sum_{t=1}^{T-1}
 \{\log(1+uZ_t)-\log(1+\lambda_tZ_t)\}\\
 &\le
 \frac19\left\{
 A_0(\lambda_1-u)^2-A_{T-1}(\lambda_T-u)^2
 \right\}
 {}+\frac94\sum_{t=1}^{T-1}\frac{g_t^2}{A_t}\\
 &\le
 \frac19+\frac94\log(1+16T).
\end{aligned}
\]
Indeed, \(A_0=1\), \(\lambda_1=0\), and \(|u|\le1\), so the initial
quadratic term is at most \(1/9\), while
\[
 \sum_{t=1}^{T-1}\frac{g_t^2}{A_t}
 \le\log A_{T-1}
 \le\log(1+16T).
\]
For \(T=1\), both sums above are empty and the same bound holds.

It remains to control the final factor. Both \(u\) and \(\lambda_T\)
belong to \([-b_T,b_T]\), so
\(|uZ_T|\le3/4\) and \(|\lambda_TZ_T|\le3/4\). Consequently,
\[
 \log(1+uZ_T)-\log(1+\lambda_TZ_T)
 \le\log7.
\]
Combining these bounds and maximizing over
\(|u|\le\underline b_T\) gives
\[
 \sup_{|u|\le\underline b_T}
 \sum_{t=1}^T
 \{\log(1+uZ_t)-\log(1+\lambda_tZ_t)\}
 \le
 \log7+\frac19+\frac94\log(1+16T).
\]
This proves Proposition~\ref{prop:ons-regret}.

For the deterministic consequence used below, fix
\(u\in[-3/4,3/4]\). Since \(b_t\ge3/4\) for every \(t\), this comparator
belongs to both \([-b_t,b_t]\) and \([-b_{t+1},b_{t+1}]\) at every step.
The preceding one-step inequality therefore telescopes through \(t=T\).
Using \(u^2\le9/16\) and
\[
 \sum_{t=1}^T\frac{g_t^2}{A_t}
 =\sum_{t=1}^T\frac{A_t-A_{t-1}}{A_t}
 \le\sum_{t=1}^T\log\frac{A_t}{A_{t-1}}
 =\log A_T
\]
gives
\begin{equation}
 \begin{aligned}
 \sup_{u\in[-3/4,3/4]}
 \sum_{t=1}^T\{\log(1+uZ_t)-\log(1+\lambda_tZ_t)\}
 &\le\frac1{16}+\frac94\log A_T\\
 &\le\frac1{16}+\frac94\log(1+16T).
 \end{aligned}
 \label{eq:ons-regret}
\end{equation}
Since \(\sum_{t=1}^T\log(1+\lambda_tZ_t)=\log M_T\), this is the
deterministic explicit bound used in the power proofs below.
\end{proof}

\subsection{Proof of Theorem~\ref{thm:feature-ons-rate}}

We first establish two auxiliary lemmas. The first relates the realized
feature contrast to \(\Delta_h(P,Q)\), and the second gives the
finite-window bound used to prove the theorem.

Write \(Z_{h,t}\) for the payoff \(Z_t\) constructed from feature \(h\).
For finite \(n\), define the realized feature contrast
\[
 d_{h,n}(D_0)
 :=\mathbb{E}_Q\!\left[
 h\!\left(\frac{R_t-1}{n}\right)\middle|D_0
 \right]
 -\frac1{n+1}\sum_{j=1}^{n+1}
 h\!\left(\frac{j-1}{n}\right).
\]
Under the i.i.d.\ alternative, this contrast does not depend on \(t\).

\begin{lemma}
\label{lem:feature-drift}
Let \(P\) be continuous, let \(Q\) be arbitrary, let \(D_0\sim P^n\),
and let \(h\) be \(L\)-Lipschitz with zero uniform mean. Then
\[
 d_{h,n}(D_0)\xrightarrow{P^n}\Delta_h(P,Q),
\]
and, for every \(x>0\),
\begin{equation}
 \mathbb{P}_{P^n}\left(
 |d_{h,n}(D_0)-\Delta_h(P,Q)|
 >Lx+\frac{L}{n+1}
 \right)
 \le2e^{-2nx^2}.
 \label{eq:feature-calibration-concentration}
\end{equation}
Moreover, for every \(t\ge1\),
\[
 \mathbb{E}_Q(Z_{h,t}\mid D_0)
 =\frac{n+1}{n+t}d_{h,n}(D_0).
\]
\end{lemma}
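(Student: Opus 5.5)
Conditional on \(D_0\), the ranks \(R_t\) are i.i.d. under \(Q\). For continuous \(P\) and any \(x\), \((R-1)/n=\widehat F_n(x)\), where \(\widehat F_n\) is the calibration ECDF. Ties with \(Q\)-atoms at calibration points have \(P^n\)-probability zero, or we use the lexicographic tie-breaking convention. So the realized contrast is
\[
 d_{h,n}(D_0)=\mathbb{E}_{X\sim Q}\,h\{\widehat F_n(X)\}-\frac1{n+1}\sum_{j=1}^{n+1}h\!\left(\frac{j-1}{n}\right).
\]
I will compare this with \(\Delta_h=\mathbb{E}_Q h\{F_P(X)\}-\int_0^1h(u)\,du\), using \(\int h=0\), and split the error into two terms.

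\emph{Stochastic term.} By the Lipschitz property,
\[
 \bigl|\mathbb{E}_Q h\{\widehat F_n(X)\}-\mathbb{E}_Q h\{F_P(X)\}\bigr|\le L\sup_x|\widehat F_n(x)-F_P(x)|.
\]
The DKW inequality with Massart's constant gives \(\mathbb{P}(\sup|\widehat F_n-F_P|>x)\le 2e^{-2nx^2}\).

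\emph{Deterministic term.} Compare the Riemann sum \(\frac1{n+1}\sum_{j=1}^{n+1}h\{(j-1)/n\}\) with \(\int_0^1 h=0\). Split \([0,1]\) into the \(n+1\) intervals \([(j-1)/(n+1),\,j/(n+1)]\). On the \(j\)-th interval, the grid point \((j-1)/n\) lies within distance \(1/(n+1)\) of every point. Indeed, \((j-1)/n-(j-1)/(n+1)=(j-1)/\{n(n+1)\}\le 1/(n+1)\), and the distance to the right endpoint is at most \(1/(n+1)\) as well. Hence the discrepancy is at most \(L/(n+1)\).

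Combining the two terms yields \eqref{eq:feature-calibration-concentration}. Convergence in probability follows because \(x\) can be taken arbitrarily small while \(2e^{-2nx^2}\to0\) and \(L/(n+1)\to0\).

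\emph{Drift identity.} Write \(h_j=h\{(j-1)/n\}\) and \(p_j=\mathbb{P}_Q(R=j\mid D_0)\), so that
\[
 \mathbb{E}(Z_{h,t}\mid D_0)=\sum_j p_jh_j-\sum_j\mathbb{E}(\pi_{t,j}\mid D_0)\,h_j .
\]
Since \(\mathbb{E}(N_{t-1,j}\mid D_0)=(t-1)p_j\), we get
\[
 \mathbb{E}(\pi_{t,j}\mid D_0)=\frac{1+(t-1)p_j}{n+t}.
\]
Therefore
\[
 \mathbb{E}(Z_{h,t}\mid D_0)=\Bigl(1-\frac{t-1}{n+t}\Bigr)\sum_jp_jh_j-\frac{1}{n+t}\sum_jh_j=\frac{n+1}{n+t}\Bigl(\sum_jp_jh_j-\frac1{n+1}\sum_jh_j\Bigr),
\]
and the bracket is exactly \(d_{h,n}(D_0)\).

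The one delicate step is the Riemann-sum bound with the exact constant \(L/(n+1)\), which requires the interval pairing above. The rest is routine: linearity of expectation for the drift identity, and DKW for the concentration bound.
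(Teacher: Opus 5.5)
Your proposal is correct and follows essentially the same route as the paper. You use the Lipschitz bound plus DKW for the ECDF term, and the same pairing of \(\operatorname{Unif}(0,1)\) with the grid \(\{0,1/n,\ldots,1\}\) through the intervals \([k/(n+1),(k+1)/(n+1))\) for the \(L/(n+1)\) grid-mean bound. For the drift identity you use linearity of expectation under conditionally i.i.d.\ ranks, computing it cell by cell via \(\mathbb{E}(N_{t-1,j}\mid D_0)=(t-1)p_j\), whereas the paper rewrites the \(\pi_t\)-weighted mean as \(\{(n+1)\bar h_n+\sum_{s<t}h((R_s-1)/n)\}/(n+t)\); this is the same computation.
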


\begin{proof}
Abbreviate \(g_n(j)=h\{(j-1)/n\}\), and put
\[
 \bar h_n=\frac1{n+1}\sum_{j=1}^{n+1}g_n(j),
 \qquad
 m_{h,n}(D_0)=\mathbb{E}_Q\{g_n(R_t)\mid D_0\},
 \qquad
 d_{h,n}(D_0)=m_{h,n}(D_0)-\bar h_n.
\]
The \(\pi_t\)-weighted feature mean can be written as
\[
 \sum_{j=1}^{n+1}\pi_{t,j}g_n(j)
 =\frac{(n+1)\bar h_n+\sum_{s<t}g_n(R_s)}{n+t}.
\]
Conditional on \(D_0\), the ranks are i.i.d.\ under \(Q\). Taking
expectation in the preceding display and subtracting from
\(m_{h,n}(D_0)\) gives
\[
 \mathbb{E}_Q(Z_{h,t}\mid D_0)
 =\frac{n+1}{n+t}d_{h,n}(D_0).
\]
Since \(P\) is continuous,
\((R_t-1)/n=\widehat F_n(X_t)\) almost surely under
\(P^n\otimes Q\). Hence
\[
 |m_{h,n}(D_0)-\Delta_h(P,Q)|
 \le L\|\widehat F_n-F_P\|_\infty.
\]
To control the grid mean, let \(U\sim\operatorname{Unif}(0,1)\) and set
\(U_n=k/n\) when
\(U\in[k/(n+1),(k+1)/(n+1))\). Then \(U_n\) is uniform on
\(\{0,1/n,\ldots,1\}\) and \(|U_n-U|\le1/(n+1)\). Since
\(\mathbb{E}h(U)=0\),
\[
 |\bar h_n|
 =|\mathbb{E}h(U_n)-\mathbb{E}h(U)|
 \le\frac{L}{n+1}.
\]
The DKW inequality proves
\eqref{eq:feature-calibration-concentration}, which also gives
\(d_{h,n}(D_0)\xrightarrow{P^n}\Delta_h(P,Q)\).

Finally, boundedness of \(d_{h,n}(D_0)\) upgrades convergence in
probability to convergence of expectations. Averaging the conditional
identity over \(D_0\) gives the relation stated in the main text:
\[
 \mathbb{E}_{P^n\otimes Q^\infty}Z_{h,t}
 =\frac{n+1}{n+t}\mathbb{E}_{D_0}d_{h,n}(D_0)
 =\frac{n+1}{n+t}\{\Delta_h(P,Q)+o_n(1)\}.
\]
The remainder does not depend on \(t\) because \(d_{h,n}(D_0)\) does not.
\end{proof}

\begin{lemma}
\label{lem:feature-ons-tail}
Let \(P\) be continuous, let \(Q\) be arbitrary, and let \(h\) be an
\(L\)-Lipschitz feature. Write \(\Delta=\Delta_h(P,Q)\),
\(\delta=|\Delta|\), and
\[
 r_T=\frac1{16}+\frac94\log(1+16T).
\]
Suppose that \(\delta>0\). If
\[
 n+1\ge\frac{8L}{\delta},
 \qquad
 1\le T\le n,
 \qquad
 T\delta^2\ge128\left\{\log\frac1\alpha+r_T\right\},
\]
then
\begin{equation}
 \mathbb{P}_{P^n\otimes Q^\infty}(\tau_h>T)
 \le
 2\exp\left(-\frac{n\delta^2}{32L^2}\right)
 +
 \exp\left(-\frac{T\delta^2}{2048}\right).
 \label{eq:feature-ons-simple-tail}
\end{equation}
\end{lemma}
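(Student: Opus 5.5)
We combine the deterministic ONS regret bound \eqref{eq:ons-regret} with a concentration argument for the realized payoffs. On the event \(\{\tau_h>T\}\) we have \(M_t<1/\alpha\) for all \(t\le T\), in particular \(\log M_T<\log(1/\alpha)\). By \eqref{eq:ons-regret}, for every fixed \(u\in[-3/4,3/4]\),
\[
 \log M_T\ge\sum_{t=1}^T\log(1+uZ_t)-r_T .
\]
Choose \(u=c\,\mathrm{sign}(\Delta)\,\delta\) with a small absolute constant \(c\le 3/4\), for instance \(c=1/4\); note \(\delta\le1\) since \(|h|\) has range at most one. Since \(|uZ_t|\le1/2\), the inequality \(\log(1+x)\ge x-x^2\) for \(|x|\le1/2\) gives
\[
 \sum_{t=1}^T\log(1+uZ_t)\ge u\sum_{t=1}^T Z_t-u^2T .
\]
Thus non-detection by time \(T\) forces \(c\,\delta\,\mathrm{sign}(\Delta)\sum_tZ_t< \log(1/\alpha)+r_T+c^2\delta^2T\). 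It therefore suffices to show that \(\mathrm{sign}(\Delta)\sum_{t\le T}Z_t\ge T\delta/4\), say, with high probability. The hypothesis \(T\delta^2\ge128\{\log(1/\alpha)+r_T\}\) then makes the right-hand side smaller than the left.

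Next we split on the calibration sample. Let \(G=\{|d_{h,n}(D_0)-\Delta|\le\delta/4\}\). By \eqref{eq:feature-calibration-concentration} with \(x=\delta/(8L)\), together with \(L/(n+1)\le\delta/8\) from \(n+1\ge8L/\delta\), we get \(\mathbb{P}(G^c)\le2\exp(-n\delta^2/(32L^2))\). This is the first term of \eqref{eq:feature-ons-simple-tail}. On \(G\), \(\mathrm{sign}(\Delta)d_{h,n}\ge3\delta/4\). Conditional on \(D_0\), Lemma~\ref{lem:feature-drift} gives \(\mathbb{E}(Z_t\mid D_0)=\frac{n+1}{n+t}d_{h,n}\), and since \(t\le T\le n\), \(\frac{n+1}{n+t}\ge\frac12\). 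So the conditional mean of \(\mathrm{sign}(\Delta)Z_t\) is at least \(3\delta/8\).

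The main obstacle is that the \(Z_t\) are not independent given \(D_0\): the centering term \(\sum_j\pi_{t,j}h_j\) depends on past ranks. To handle this, write \(Z_t=g_n(R_t)-\frac{(n+1)\bar h_n+\sum_{s<t}g_n(R_s)}{n+t}\), using the representation in the proof of Lemma~\ref{lem:feature-drift}, where the \(g_n(R_s)\) are conditionally i.i.d. given \(D_0\) with range at most one. Then \(\sum_{t\le T}Z_t\) is a linear combination \(\sum_s a_s g_n(R_s)\) plus a constant, with coefficients
\[
 a_s=1-\sum_{t=s+1}^{T}\frac{1}{n+t}.
\]
Because \(T\le n\), we have \(a_s\in[1-\log 2,1]\), so each \(a_s\) is bounded. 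McDiarmid's (or Hoeffding's) inequality applied to this sum with bounded differences at most one then gives deviations of order \(\delta T/8\) with probability at least \(1-\exp(-cT\delta^2)\). Alternatively, one could apply Azuma--Hoeffding directly to the martingale differences \(Z_t-\mathbb{E}(Z_t\mid\mathcal{G}_{t-1},D_0)\), bounding the drift of the predictable means separately.

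The result is \(\mathrm{sign}(\Delta)\sum_tZ_t\ge T\delta/4\) outside an event of probability at most \(\exp(-T\delta^2/2048)\) after the constants are tracked loosely. A union bound with \(G^c\) yields \eqref{eq:feature-ons-simple-tail}. What remains is the routine check that, with \(c=1/4\), \(c\delta\cdot T\delta/4-c^2\delta^2T\ge T\delta^2/32\) exceeds \(\log(1/\alpha)+r_T\) under the stated condition on \(T\).
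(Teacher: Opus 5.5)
Your closing ``routine check'' is false, and the contradiction fails with it. With $s=\operatorname{sign}(\Delta)$, $u=cs\delta$, and your high-probability bound $s\sum_{t\le T}Z_t\ge T\delta/4$, the pathwise lower bound on the comparator log wealth is $c\delta\cdot T\delta/4-c^2\delta^2T=T\delta^2(c/4-c^2)$. At $c=1/4$ this is exactly $0$, not $T\delta^2/32$. On your good event, the left side of your non-detection inequality is then at least $c^2\delta^2T$, while the right side is $\log(1/\alpha)+r_T+c^2\delta^2T$, so no contradiction follows. The figure $T\delta^2/32$ is what $c=1/4$ gives against the conditional mean $3T\delta/8$, i.e.\ before you pay the deviation $T\delta/8$; at $c=1/4$ that deviation uses up the whole margin.

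The repair is easy, but you must make it. The factor $c/4-c^2$ is maximized at $c=1/8$, which is the paper's $\theta=\delta/8$. This choice gives margin $T\delta^2/64\ge2\{\log(1/\alpha)+r_T\}$ under the hypothesis on $T$. Alternatively, keep $c=1/4$ but allow a deviation of only $3T\delta/32$. Then $s\sum_tZ_t\ge9T\delta/32$, and the margin is $T\delta^2/128\ge\log(1/\alpha)+r_T$, at Hoeffding cost $\exp(-9T\delta^2/512)$.

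Once the constant is fixed, your proof is correct, and it handles concentration differently from the paper. The paper linearizes $\log(1+s\theta Z_t)$ only inside the conditional expectation. It then applies McDiarmid to the nonlinear comparator log wealth, bounding a single rank change through the $2\theta$-Lipschitz constant of the log and the $1/(n+t)$ leakage into later centering terms; this is the source of the exponent $T\delta^2/2048$.

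You instead linearize pathwise first, so only the linear statistic $\sum_{t\le T}Z_t$ needs to concentrate. Your representation of it is exact: a deterministic constant plus $\sum_s a_s h\{(R_s-1)/n\}$, with $a_s=1-\sum_{t=s+1}^T(n+t)^{-1}\in[1-\log 2,1]$. Conditional on $D_0$, this is a weighted sum of i.i.d.\ variables with range at most one. Hoeffding at deviation $T\delta/8$ therefore gives failure probability $\exp(-T\delta^2/32)$, well inside the required $\exp(-T\delta^2/2048)$.

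Your route is more elementary and makes the dependence through the centering term explicit. The paper's route works directly with the log wealth that the regret bound compares against, and it reuses the same McDiarmid step, conditional on $\mathcal H_\nu$, in the delayed-change theorem.
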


\begin{proof}
The concentration bound in Lemma~\ref{lem:feature-drift}, with
\(x=\delta/(8L)\), gives
\[
 \mathbb{P}_{P^n}\left\{
 |d_{h,n}(D_0)-\Delta|>\frac{\delta}{4}
 \right\}
 \le2\exp\left(-\frac{n\delta^2}{32L^2}\right),
\]
where \(n+1\ge8L/\delta\) controls the deterministic grid error. Work on
the complementary event, and write \(s=\operatorname{sign}(\Delta)\).
Then \(s\,d_{h,n}(D_0)\ge3\delta/4\).

Take the fixed comparator \(u=s\theta\), where
\(\theta=\delta/8\). Since \(h\) has zero uniform mean and range width
at most one, \(\delta\le1\). Hence \(|\theta Z_{h,t}|\le1/8\), and
\(\log(1+x)\ge x-x^2\) applies. Conditional on \(D_0\),
Lemma~\ref{lem:feature-drift} gives
\[
 \mathbb{E}_Q(Z_{h,t}\mid D_0)
 =\frac{n+1}{n+t}d_{h,n}(D_0).
\]
Since \(T\le n\), we have
\(\sum_{t=1}^T(n+1)/(n+t)\ge T/2\). The range condition therefore
yields
\begin{equation}
 \begin{split}
 \mathbb{E}_Q\!\left[
 \sum_{t=1}^T\log(1+s\theta Z_{h,t})
 \,\middle|\,D_0
 \right]
 &\ge
 \frac{3\theta\delta T}{8}
 -\theta^2T\\
 &=\frac{T\delta^2}{32}.
 \end{split}
 \label{eq:feature-ons-comparator-mean}
\end{equation}

The sign \(s\) has already been combined with the drift. Thus the linear
term is \(3\theta\delta T/8\), including when \(\Delta<0\).

Conditional on \(D_0\), the ranks are independent. Replacing \(R_j\)
while holding the other ranks fixed changes \(Z_{h,j}\) by at most one
and each \(Z_{h,t}\), \(t>j\), by at most \(1/(n+t)\). Each comparator
log factor is \(2\theta\)-Lipschitz as a function of \(Z_{h,t}\), so the
total change in the comparator log wealth is at most
\[
 2\theta\left\{1+\sum_{t=j+1}^T\frac1{n+t}\right\}
 \le4\theta.
\]
McDiarmid's inequality and \eqref{eq:feature-ons-comparator-mean} imply
\[
 \mathbb{P}_Q\left(
 \sum_{t=1}^T\log(1+s\theta Z_{h,t})
 <\frac{3T\delta^2}{128}
 \,\middle|\,D_0
 \right)
 \le\exp\left(-\frac{T\delta^2}{2048}\right).
\]
 The explicit regret bound in \eqref{eq:ons-regret} gives,
outside this lower-tail event,
\[
 \log M_{h,T}
 \ge \frac{3T\delta^2}{128}-r_T
 \ge \log\frac1\alpha.
\]
Hence \(\tau_h\le T\), and averaging over \(D_0\) proves
\eqref{eq:feature-ons-simple-tail}.
\end{proof}

We now prove Theorem~\ref{thm:feature-ons-rate}.

\begin{proof}
Write
\[
 \delta=|\Delta_h|,
 \qquad
 L=L_h,
 \qquad
 K=\max\{1,L^2\},
 \qquad
 \ell_\beta=\log\frac{e}{\alpha\beta\delta}.
\]
The feature normalization gives \(\delta\le1\). Hence
\(\ell_\beta\ge1+\log(1/\beta)\), and
\(r_{h,\beta}=\delta^{-2}\ell_\beta\).

Let
\[
 T=\left\lceil C\delta^{-2}\ell_\beta\right\rceil,
 \qquad
 r_T=\frac1{16}+\frac94\log(1+16T).
\]
For \(C\ge1\), we have
\(T\le2C\delta^{-2}\ell_\beta\). Since
\(\log(1/\delta)\le\ell_\beta\) and
\(\log\ell_\beta\le\ell_\beta\), there is a universal constant
\(c_0\) such that
\[
 r_T\le c_0(1+\log C)\ell_\beta.
\]
Choose a universal \(C\) large enough that
\[
 C\ge128\{1+c_0(1+\log C)\}
 \qquad\text{and}\qquad
 C\ge2048.
\]
Then
\[
 T\delta^2
 \ge C\ell_\beta
 \ge128\left\{\log\frac1\alpha+r_T\right\}.
\]

Now suppose that
\(n\ge\lceil CK\delta^{-2}\ell_\beta\rceil\). Since \(K\ge1\),
we have \(T\le n\). Moreover, \(K\ge L\), while \(\delta\le1\) and
\(\ell_\beta\ge1\), so
\[
 n+1
 \ge CK\delta^{-2}\ell_\beta
 \ge\frac{8L}{\delta}.
\]
Lemma~\ref{lem:feature-ons-tail} therefore gives
\[
 \begin{aligned}
 \mathbb{P}_{P^n\otimes Q^\infty}(\tau_h>T)
 &\le
 2\exp\left(-\frac{n\delta^2}{32L^2}\right)
 +\exp\left(-\frac{T\delta^2}{2048}\right)\\
 &\le
 2\exp\left(-\frac{C\ell_\beta}{32}\right)
 +\exp\left(-\frac{C\ell_\beta}{2048}\right)
 \le\beta.
 \end{aligned}
\]
Thus \(\mathbb{P}_{P^n\otimes Q^\infty}(\tau_h\le T)\ge1-\beta\).
\end{proof}

\subsection{Proof of Proposition~\ref{prop:growing-calibration-consistency}}

\begin{proof}
Fix any \(\eta\in(0,1)\). Applying Theorem~\ref{thm:feature-ons-rate} with
\(\beta=\eta\), define
\[
 T_\eta
 =\left\lceil
 C|\Delta_h|^{-2}
 \log\left\{\frac{e}{\alpha\eta|\Delta_h|}\right\}
 \right\rceil.
\]
For fixed \(\eta\), this horizon does not depend on \(n\). For all sufficiently
large \(n\), the theorem's calibration-size condition holds, and therefore
\[
 \mathbb{P}_{P^n\otimes Q^\infty}
 (\tau_h^{(n)}\le T_\eta)
 \ge1-\eta.
\]
Moreover, since \(T_n\to\infty\), we have \(T_n\ge T_\eta\) for all sufficiently
large \(n\). Consequently,
\[
 \mathbb{P}_{P^n\otimes Q^\infty}
 (\tau_h^{(n)}\le T_n)
 \ge
 \mathbb{P}_{P^n\otimes Q^\infty}
 (\tau_h^{(n)}\le T_\eta)
 \ge1-\eta.
\]
Thus,
\[
 \liminf_{n\to\infty}
 \mathbb{P}_{P^n\otimes Q^\infty}
 (\tau_h^{(n)}\le T_n)
 \ge1-\eta.
\]
Letting \(\eta\downarrow0\) shows that the limit inferior is at least one.
Since the probability is at most one, it converges to one.
\end{proof}

\subsection{Proof of Corollary~\ref{cor:fixed-mixture-power}}

\begin{proof}
Choose \(\ell\) as in the corollary. Since
\(M_{{\rm mix},t}^{(n)}\ge\omega_\ell M_{h_\ell,t}^{(n)}\), a crossing by
component \(\ell\) at \(1/(\alpha\omega_\ell)\) forces the mixture to cross
\(1/\alpha\). Proposition~\ref{prop:growing-calibration-consistency},
applied to \(h_\ell\) at level \(\alpha\omega_\ell\), therefore gives
\[
 \mathbb{P}_{P^n\otimes Q^\infty}
 \left(\tau_{\rm mix}^{(n)}\le T_n\right)\longrightarrow1.
\]
\end{proof}

\subsection{Proof of Theorem~\ref{thm:conditional-no-free-lunch}}

\begin{proof}
Fix \(D_0=d=(d_1,\ldots,d_n)\) and a stream distribution \(Q\). Let
\(U\sim\nu\) collect the auxiliary randomization independently of the data,
and let \(G_d\) be the empirical distribution of \(d_1,\ldots,d_n\). For
\(\varepsilon\in(0,1)\), define
\[
  P_\varepsilon=(1-\varepsilon)Q+\varepsilon G_d .
\]
Each coordinate of \(d\) is an atom of \(G_d\), so \(D_0=d\) has positive
probability under \(P_\varepsilon^n\). Calibration-conditional validity for
\(P_\varepsilon\) therefore gives
\[
  \mathbb{P}_{X_{1:\infty}\sim P_\varepsilon^\infty,\,U\sim\nu}
  \bigl(\tau(d,X_{1:\infty},U)<\infty\bigr)
  \le \alpha .
\]
For \(T\ge1\), let
\[
  A_T(d)=\{(x_{1:T},u):\text{the procedure stops by time }T
  \text{ after history }(d,x_{1:T})\}.
\]
The preceding bound implies
\[
  (P_\varepsilon^T\otimes\nu)(A_T(d))\le \alpha .
\]
As \(\varepsilon\downarrow0\),
\(P_\varepsilon^T\otimes\nu\to Q^T\otimes\nu\) in total variation, so
\((Q^T\otimes\nu)(A_T(d))\le\alpha\). Since \(A_T(d)\) increases to the
event of eventual rejection, continuity from below gives
\[
  \mathbb{P}_{Q^\infty,\,U\sim\nu}
  \bigl(\tau(d,X_{1:\infty},U)<\infty\bigr)
  =
  \lim_{T\to\infty}(Q^T\otimes\nu)(A_T(d))
  \le \alpha .
\]
Because \(d\) and \(Q\) were arbitrary, this is the claimed bound for every
fixed calibration sample.
\end{proof}

\subsection{Proof of Theorem~\ref{thm:tv-bound}}

\begin{proof}
Let \(U\) contain any auxiliary randomness and define the test function
\[
 \varphi(d)=
 \mathbb{P}_{X_{1:\infty}\sim Q^\infty,U}
 \{\tau(d,X_{1:\infty},U)<\infty\}.
\]
When \(D_0\sim Q^n\), iterated expectation and validity at the null
distribution \(Q\) give
\[
 \begin{aligned}
 \mathbb{E}_{Q^n}\varphi
 &=\mathbb{E}_{D_0\sim Q^n}\left[
 \mathbb{P}_{X_{1:\infty}\sim Q^\infty,U}
 \{\tau(D_0,X_{1:\infty},U)<\infty\}
 \right]\\
 &=\mathbb{P}_{D_0\sim Q^n,\,X_{1:\infty}\sim Q^\infty,U}
 (\tau<\infty)
 \le\alpha.
 \end{aligned}
\]
Thus \(\varphi\) is feasible in the definition of
\(\beta_\alpha(P^n,Q^n)\). For the target experiment
\(D_0\sim P^n\) and \(X_{1:\infty}\sim Q^\infty\),
\[
 \begin{aligned}
 \mathbb{E}_{P^n}\varphi
 &=\mathbb{E}_{D_0\sim P^n}\left[
 \mathbb{P}_{X_{1:\infty}\sim Q^\infty,U}
 \{\tau(D_0,X_{1:\infty},U)<\infty\}
 \right]\\
 &=\mathbb{P}_{D_0\sim P^n,\,X_{1:\infty}\sim Q^\infty,U}
 (\tau<\infty).
 \end{aligned}
\]
The definition of \(\beta_\alpha(P^n,Q^n)\) therefore implies
\[
 \mathbb{P}_{P^n\otimes Q^\infty}(\tau<\infty)
 \le\beta_\alpha(P^n,Q^n).
\]

For every feasible test \(\psi\), the definition of total variation gives
\[
 \mathbb{E}_{P^n}\psi-\mathbb{E}_{Q^n}\psi
 \le\mathrm{TV}(P^n,Q^n).
\]
Since \(\mathbb{E}_{Q^n}\psi\le\alpha\), it follows that
\[
 \mathbb{E}_{P^n}\psi
 \le\alpha+\mathrm{TV}(P^n,Q^n).
\]
Taking the supremum over all feasible \(\psi\) yields
\[
 \beta_\alpha(P^n,Q^n)
 \le\alpha+\mathrm{TV}(P^n,Q^n).
\]
Combining the two inequalities gives
\[
 \mathbb{P}_{D_0\sim P^n,\,X_{1:\infty}\sim Q^\infty}(\tau<\infty)
 \le\beta_\alpha(P^n,Q^n)
 \le\alpha+\mathrm{TV}(P^n,Q^n).
\]
\end{proof}

\subsection{Proof of Remark~\ref{rem:gaussian-power-envelope}}

\begin{proof}
By Theorem~\ref{thm:tv-bound}, it remains to evaluate
\(\beta_\alpha(P^n,Q^n)\).
For \(P=\mathcal N(0,1)\), \(Q=\mathcal N(\mu,1)\), and \(\mu>0\),
\[
 \log\frac{dP^n}{dQ^n}(y_{1:n})
 =-\mu\sum_{i=1}^n y_i+\frac{n\mu^2}{2}.
\]
The level-\(\alpha\) Neyman--Pearson test therefore rejects when
\[
 \overline Y\le\mu-\frac{\Phi^{-1}(1-\alpha)}{\sqrt n}.
\]
Thus
\(\beta_\alpha(P^n,Q^n)=
\Phi\{\sqrt n\mu-\Phi^{-1}(1-\alpha)\}\). For \(\mu<0\), the same
expression holds with \(|\mu|\). Hence
\[
 \mathbb{P}_{P^n\otimes Q^\infty}(\tau<\infty)
 \le\Phi\!\left(\sqrt n|\mu|-\Phi^{-1}(1-\alpha)\right)<1.
\]
\end{proof}

\subsection{Proof of the Gaussian dispersion formula}

For \(P=\mathcal N(0,1)\) and \(Q=\mathcal N(0,\sigma^2)\), with
\(\sigma>0\), write
\(X=\sigma Z\), where \(Z\sim\mathcal N(0,1)\). Symmetry and
\(\Phi(-x)=1-\Phi(x)\) give \(\mathbb{E}\Phi(\sigma Z)=1/2\), hence
\(\Delta_{\rm ord}=0\). Also,
\[
 \begin{split}
 \mathbb{E}|2\Phi(\sigma Z)-1|
 &=2\int_0^\infty\{2\Phi(\sigma z)-1\}\phi(z)\,dz\\
 &=\frac{2}{\pi}\arctan(\sigma).
 \end{split}
\]
The last identity follows from rotational symmetry of two independent
standard Gaussians. Thus
\(\Delta_{\rm disp}(\sigma)=2\arctan(\sigma)/\pi-1/2\). Since
\(\Delta_{\rm disp}(1)=0\) and
\(\Delta_{\rm disp}'(\sigma)=2/\{\pi(1+\sigma^2)\}>0\), the contrast is
zero only at \(\sigma=1\).

The same sign interpretation holds for a common symmetric location scale
family. Let \(P\) and \(Q\) be the laws of \(m+\sigma_PZ\) and
\(m+\sigma_QZ\), where \(\sigma_P,\sigma_Q>0\) and \(Z\) has a continuous,
strictly increasing
distribution function \(G\) symmetric about zero. With
\(r=\sigma_Q/\sigma_P\), \(F_P(X)=G(rZ)\) for \(X\sim Q\), and
\[
 |2G(rZ)-1|=2G(r|Z|)-1
\]
is strictly increasing in \(r\) almost surely. Because its expectation is
\(1/2\) at \(r=1\), \(\Delta_{\rm disp}\) is positive when
\(\sigma_Q>\sigma_P\) and negative when \(\sigma_Q<\sigma_P\). Without
this common form, its sign need not represent a scale ordering.

\subsection{Delayed change results}
\label{app:delayed-change}

We extend the immediate-change result to a shift after \(\nu\) null observations and bound the subsequent detection delay.

\begin{theorem}
\label{thm:delayed-feature-ons-rate}
Let \(P\) be continuous, \(Q\) be arbitrary, and \(h:[0,1]\to\mathbb R\)
be a Lipschitz feature with constant \(L_h\) and
\(\Delta_h=\Delta_h(P,Q)\ne0\). For an integer \(\nu\ge0\), suppose that
\(D_0\sim P^n\), that \(X_1,\ldots,X_\nu\) are i.i.d.\ from \(P\), and
that \(X_{\nu+1},X_{\nu+2},\ldots\) are i.i.d.\ from \(Q\), with all
observations independent.

There exist universal constants \(C_0,C_1>0\) with the following property.
For every \(\beta\in(0,1)\), define
\[
  T_{h,\beta}(\nu)
  :=
  \left\lceil
    C_1\left[
      \frac{1}{|\Delta_h|^2}
      \log\left\{
        \frac{e(\nu+1)}
             {\alpha\beta|\Delta_h|}
      \right\}
      +
      \sqrt{
        \frac{\nu}{|\Delta_h|^2}
        \log\left\{
          \frac{e(\nu+1)}
               {\alpha\beta|\Delta_h|}
        \right\}
      }
    \right]
  \right\rceil .
\]
Let \(\tau_h\) be the stopping time produced by
Algorithm~\ref{alg:feature-rank-ons} with feature \(h\). Suppose that
\[
n\ge\left\lceil C_0\max\{1,L_h^2\}|\Delta_h|^{-2}
\log\{e(\nu+1)/(\alpha\beta|\Delta_h|)\}\right\rceil.
\]
Then
\begin{align*}
  \mathbb{P}_{P^{n+\nu}\otimes Q^\infty}
  \left\{\tau_h\le\nu+T_{h,\beta}(\nu)\right\}
  &\ge 1-\beta,
  \\
  \mathbb{P}_{P^{n+\nu}\otimes Q^\infty}
  \left\{\nu<\tau_h\le\nu+T_{h,\beta}(\nu)\right\}
  &\ge 1-\alpha-\beta.
\end{align*}
\end{theorem}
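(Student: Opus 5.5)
The plan is to reduce the delayed case to the immediate-change argument of Lemma~\ref{lem:feature-ons-tail}, with three changes. The calibration sample is handled as before. The regret bound is taken over the whole path. The pre-change segment is controlled as a null martingale segment. The second display in the statement follows from the first by subtracting the Ville event: under the pre-change segment the ranks \(R_1,\ldots,R_\nu\) have exactly the predictive law \eqref{eq:fixed-rank-law}, so \((M_t)_{t\le\nu}\) is a nonnegative martingale stopped at \(\nu\). Hence \(\mathbb{P}(\tau_h\le\nu)\le\alpha\) by Ville's inequality, and the two displays differ by at most \(\alpha\).

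For the first display, fix \(T=T_{h,\beta}(\nu)\) and the horizon \(\nu+T\). I first work on the DKW event of Lemma~\ref{lem:feature-drift} with \(x=\delta/(8L)\), so that \(s\,d_{h,n}(D_0)\ge3\delta/4\); its failure probability \(2e^{-n\delta^2/(32L^2)}\) is at most \(\beta/2\) by the calibration-size condition, since \(n\delta^2\gtrsim C_0\ell\) with \(\ell=\log\{e(\nu+1)/(\alpha\beta\delta)\}\). Next I apply the explicit regret bound \eqref{eq:ons-regret} on the entire horizon with the fixed comparator \(u=s\theta\), \(\theta=\delta/8\):
\[
 \log M_{\nu+T}\ge\sum_{t=1}^{\nu+T}\log(1+s\theta Z_t)-r_{\nu+T}.
\]
I split the comparator sum into pre-change terms \(t\le\nu\) and post-change terms \(t>\nu\). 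For the pre-change part, the terms \(\log(1+s\theta Z_t)\) are not centered, but \(\theta Z_t\) is a null martingale difference with \(|\theta Z_t|\le\theta\). Using \(\log(1+x)\ge x-x^2\), the pre-change sum is at least \(\sum_{t\le\nu}s\theta Z_t-\nu\theta^2\). Azuma's inequality then bounds the martingale part below by \(-\theta\sqrt{2\nu\log(4/\beta)}\) with probability \(1-\beta/4\). This gives a pre-change loss of order \(\nu\delta^2+\delta\sqrt{\nu\ell}\).

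\textbf{The \(\nu\delta^2\) term is the main obstacle}, since it is not of the form allowed by \(T\asymp\delta^{-2}\ell+\sqrt{\nu\ell}/\delta\). I will not use a fixed \(\theta\). Instead I take a comparator \(u=s\theta\) with \(\theta\) tuned to \(\nu\) and \(T\), e.g.\ \(\theta\asymp\min\{\delta,\ T\delta/(\nu+T)\}\). With this choice the quadratic pre-change cost \(\nu\theta^2\) is balanced against the linear post-change gain \(\theta T\delta\). Optimizing gives gain \(\asymp T^2\delta^2/(\nu+T)\), and requiring this to exceed \(\log(1/\alpha)+r_{\nu+T}+\) fluctuation terms yields exactly \(T\gtrsim\delta^{-2}\ell+\sqrt{\nu\ell}/\delta\). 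The \(\log(\nu+1)\) inside \(\ell\) comes from \(r_{\nu+T}\).

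For the post-change part I need the drift of \(Z_t\), \(t>\nu\), conditional on \(D_0\) and the pre-change ranks. The analogue of Lemma~\ref{lem:feature-drift} gives
\[
 \mathbb{E}(Z_t\mid D_0,R_{1:\nu})=\frac{(n+1)d_{h,n}+\text{(pre-change contribution)}}{n+t}.
\]
Here the pre-change contribution \(\sum_{s\le\nu}\{g_n(R_s)-\bar h_n\}\) has to be bounded by \(O(\sqrt{\nu\log(1/\beta)})\) via Hoeffding (conditionally on \(D_0\), near-centered). I also need \(n\gtrsim \nu+T\) or a direct bound on \((n+1)/(n+t)\). If \(n\) is not large relative to \(\nu\), the denominator \(n+t\) degrades the drift. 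In that case I will instead use the cruder bound \(\sum_{t=\nu+1}^{\nu+T}(n+1)/(n+t)\ge T(n+1)/(n+\nu+T)\) and absorb the loss into \(C_1\); the condition on \(n\) in the statement is what should make this work, and checking this is where I expect constants to need care. McDiarmid, as in Lemma~\ref{lem:feature-ons-tail}, then controls the post-change sum with probability \(1-\beta/4\). A union bound over the three bad events gives \(\tau_h\le\nu+T\) with probability at least \(1-\beta\).
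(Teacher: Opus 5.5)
You handle the pre-change segment correctly, but by a different route from the paper. You apply the whole-path bound \eqref{eq:ons-regret} with a fixed comparator $u=s\theta$, $\theta\asymp T\delta/(\nu+T)$, and control $\sum_{t\le\nu}s\theta Z_t$ by Azuma under the null. The paper instead restarts the regret analysis at $\nu$ and bounds the inherited ONS state $A_\nu\lambda_{\nu+1}^2/9$ by $\log M_{h,\nu}+\frac94\log A_\nu$ on $\{\tau_h>\nu\}$. Both give a pre-change cost of order $\nu\theta^2$; yours adds an Azuma fluctuation but avoids the state argument. Your $\theta$-tuning and your Ville step for the second display also match the paper. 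The gap is in the post-change drift.

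Conditional on $D_0,R_{1:\nu}$, the post-change ranks are i.i.d.\ with $\mathbb{E}(V_{\nu+k}\mid D_0,R_{1:\nu})=m_Q$, where $V_s=h\{(R_s-1)/n\}$ and $m_Q=\mathbb{E}_Q(V_t\mid D_0)$. Averaging also over the $m-1$ intermediate post-change ranks that enter the predictive mean gives
\[
 \mathbb{E}(Z_{\nu+m}\mid D_0,R_{1:\nu})
 =\frac{(n+1)d_{h,n}+\sum_{s\le\nu}(m_Q-V_s)}{n+\nu+m}
 =\frac{D}{D+m-1}\,d_\nu,
 \qquad D=n+\nu+1 .
\]
Here $d_\nu=d_{h,n}-(\nu d_P+S_\nu)/D$, where $d_P$ is $d_{h,n}$ with $Q$ replaced by $P$ and $S_\nu=\sum_{s\le\nu}(V_s-m_P)$. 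So the pre-change contribution is $\sum_{s\le\nu}(m_Q-V_s)=\nu d_{h,n}-\sum_{s\le\nu}\{g_n(R_s)-\bar h_n\}$. It contains a systematic term $\nu d_{h,n}\approx\nu\Delta$ in the signal direction, and you dropped it. Separately, $\sum_{s\le\nu}\{g_n(R_s)-\bar h_n\}$ is not $O(\sqrt{\nu\log(1/\beta)})$: its conditional mean is $\nu d_P$, which is linear in $\nu$ and harmless only after dividing by $D\ge\nu$.

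Because of the missing term, you conclude that the drift is diluted by $(n+1)/(n+t)$ when $\nu\gg n$, and you plan to absorb $T(n+1)/(n+\nu+T)$ into $C_1$. That step fails. The calibration condition only requires $n\gtrsim\delta^{-2}\log\{e(\nu+1)/(\alpha\beta\delta)\}$, so $n/\nu$ can be arbitrarily small and $(n+1)/(n+\nu+T)$ has no universal lower bound. With that diluted drift, your comparator lower bound could not beat $\log(1/\alpha)+r_{\nu+T}$.

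With the correct identity there is no dilution. The DKW event already bounds $|d_{h,n}-\Delta|$ and $|d_P|$ simultaneously, since $\Delta_h(P,P)=0$. A Hoeffding event $|S_\nu|\le D\delta/8$ then gives $s\,d_\nu\ge3\delta/4$. What you actually need is the window condition $T\le D$. It yields $\sum_{m\le T}D/(D+m-1)\ge T/2$ and the $4\theta$ bounded-difference constant in McDiarmid. It follows from the calibration condition once $C_0\gtrsim C_1^2$, and large $\nu$ only makes it easier.
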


\begin{corollary}
\label{cor:delayed-order-ons-rate}
Under the setting of Theorem~\ref{thm:delayed-feature-ons-rate}, take
\(h=h_{\rm ord}\). For every \(\beta\in(0,1)\), if
\[
 n\ge
 \left\lceil
 C_0|\Delta_{\rm ord}|^{-2}
 \log\left\{
 \frac{e(\nu+1)}
 {\alpha\beta|\Delta_{\rm ord}|}
 \right\}
 \right\rceil,
\]
then
\begin{align*}
 \mathbb{P}_{P^{n+\nu}\otimes Q^\infty}
 \left\{\tau_{\rm ord}\le
 \nu+T_{h_{\rm ord},\beta}(\nu)\right\}
 &\ge1-\beta,
 \\
 \mathbb{P}_{P^{n+\nu}\otimes Q^\infty}
 \left\{\nu<\tau_{\rm ord}\le
 \nu+T_{h_{\rm ord},\beta}(\nu)\right\}
 &\ge1-\alpha-\beta.
\end{align*}
\end{corollary}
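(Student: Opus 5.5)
The corollary is a direct specialization of Theorem~\ref{thm:delayed-feature-ons-rate} to \(h=h_{\rm ord}\). The plan is to check that \(h_{\rm ord}\) meets that theorem's hypotheses, read off its Lipschitz constant, and substitute into the calibration-size condition.

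First, verify that \(h_{\rm ord}(u)=u-1/2\) is a feature in the sense of Section~\ref{sec:feature-targets}.
\begin{itemize}[leftmargin=1.5em]
\item Its integral over \([0,1]\) is \(\int_0^1(u-1/2)\,du=0\).
\item Its range is \([-1/2,1/2]\), so the sup--inf width is exactly \(1\).
\item It is Lipschitz with constant \(L_{h_{\rm ord}}=1\).
\end{itemize}
Hence \(\max\{1,L_{h_{\rm ord}}^2\}=1\). By definition, \(\Delta_{h_{\rm ord}}(P,Q)=\Delta_{\rm ord}(P,Q)\), and the corollary's setting already supplies \(P\) continuous, \(Q\) arbitrary, \(\Delta_{\rm ord}\ne0\), the delayed-change model with \(\nu\) pre-change null observations, and \(\tau_{\rm ord}\) produced by Algorithm~\ref{alg:feature-rank-ons} with this feature.

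Second, substitute \(\max\{1,L_h^2\}=1\) and \(|\Delta_h|=|\Delta_{\rm ord}|\) into the calibration-size requirement of Theorem~\ref{thm:delayed-feature-ons-rate}. It becomes exactly
\[
 n\ge\left\lceil C_0|\Delta_{\rm ord}|^{-2}\log\{e(\nu+1)/(\alpha\beta|\Delta_{\rm ord}|)\}\right\rceil,
\]
which is the corollary's assumption. The horizon \(T_{h_{\rm ord},\beta}(\nu)\) is the theorem's \(T_{h,\beta}(\nu)\) at \(h=h_{\rm ord}\). The constants \(C_0\) and \(C_1\) are universal, so they do not depend on the feature.

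Both probability bounds then follow verbatim from the theorem. The only point needing care is that the constants are indeed universal and the feature is genuinely normalized, so that no extra factor enters. Since the width condition holds with equality, this step is immediate, and there is no substantive obstacle beyond the bookkeeping.
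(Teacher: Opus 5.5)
Your proposal is correct and matches the paper's proof. The paper also notes that \(h_{\rm ord}\) has Lipschitz constant one, so that \(\max\{1,L_h^2\}=1\), and obtains the result directly from Theorem~\ref{thm:delayed-feature-ons-rate}; your extra check of the feature normalization is appropriate bookkeeping.
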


\begin{proposition}
\label{prop:delayed-calibration-consistency}
Let \(P\) be continuous, let \(Q\) be arbitrary, fix an integer
\(\nu\ge0\), and let \(h\) be a Lipschitz feature with
\(\Delta_h(P,Q)\ne0\). For each \(n\), let \(M_{h,t}^{(n)}\) and
\(\tau_h^{(n)}\) denote the e-process and stopping time based on a
calibration sample of size \(n\). Then, for every deterministic sequence
\(T_n\to\infty\),
\[
 \mathbb{P}_{P^{n+\nu}\otimes Q^\infty}
 \left\{\tau_h^{(n)}\le\nu+T_n\right\}
 \longrightarrow1
\]
and
\[
 \mathbb{P}_{P^{n+\nu}\otimes Q^\infty}
 \left\{\tau_h^{(n)}\le\nu+T_n
 \,\middle|\,\tau_h^{(n)}>\nu\right\}
 \longrightarrow1.
\]
Moreover,
\[
 \liminf_{n\to\infty}
 \mathbb{P}_{P^{n+\nu}\otimes Q^\infty}
 \left\{\nu<\tau_h^{(n)}\le\nu+T_n\right\}
 \ge1-\alpha.
\]
In particular,
\(\mathbb{P}_{P^{n+\nu}\otimes Q^\infty}
(\tau_h^{(n)}<\infty)\to1\).
\end{proposition}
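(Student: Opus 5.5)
The plan is to derive Proposition~\ref{prop:delayed-calibration-consistency} from Theorem~\ref{thm:delayed-feature-ons-rate}, mirroring the proof of Proposition~\ref{prop:growing-calibration-consistency}. Fix \(\nu\) and \(\eta\in(0,1)\), and apply Theorem~\ref{thm:delayed-feature-ons-rate} with \(\beta=\eta\). The horizon \(T_{h,\eta}(\nu)\) depends only on \(\nu,\alpha,\eta,|\Delta_h|\), and not on \(n\). Likewise, the calibration-size requirement \(n\ge\lceil C_0\max\{1,L_h^2\}|\Delta_h|^{-2}\log\{e(\nu+1)/(\alpha\eta|\Delta_h|)\}\rceil\) holds for all sufficiently large \(n\). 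Since \(T_n\to\infty\), we also have \(T_n\ge T_{h,\eta}(\nu)\) eventually. Monotonicity of \(\{\tau\le\nu+T\}\) in \(T\) then gives, for all large \(n\),
\[
 \mathbb{P}\{\tau_h^{(n)}\le\nu+T_n\}\ge1-\eta,
 \qquad
 \mathbb{P}\{\nu<\tau_h^{(n)}\le\nu+T_n\}\ge1-\alpha-\eta .
\]
Taking \(\liminf\) and then letting \(\eta\downarrow0\) yields the first claim and the third claim (\(\liminf\ge1-\alpha\)). The claim \(\mathbb{P}(\tau_h^{(n)}<\infty)\to1\) follows from the first claim because \(\{\tau\le\nu+T_n\}\subseteq\{\tau<\infty\}\).

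The conditional statement is the only step needing extra care, and I expect it to be the main obstacle, since the conditioning event must have probability bounded away from zero. Under the model, the first \(\nu\) observations are null. The process \((M_{h,t}^{(n)})_{t\le\nu}\) is therefore a nonnegative martingale under \(P^{n+\nu}\), because the predictive law in Theorem~\ref{thm:rank-eprocess} concerns only \(D_0\) and \(X_{1:t}\), which are i.i.d.\ \(P\) up to time \(\nu\). Ville's inequality then gives \(\mathbb{P}(\tau_h^{(n)}\le\nu)\le\alpha\), so \(\mathbb{P}(\tau_h^{(n)}>\nu)\ge1-\alpha>0\) uniformly in \(n\). Next write
\[
 \mathbb{P}\{\tau\le\nu+T_n\mid\tau>\nu\}
 =1-\frac{\mathbb{P}\{\tau>\nu+T_n\}}{\mathbb{P}\{\tau>\nu\}}
 \ge1-\frac{\mathbb{P}\{\tau>\nu+T_n\}}{1-\alpha}.
\]
The numerator tends to zero by the first claim, so the conditional probability tends to one. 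This completes the argument.
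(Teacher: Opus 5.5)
Your proposal is correct and follows essentially the same route as the paper. It applies Theorem~\ref{thm:delayed-feature-ons-rate} with a horizon \(T_{h,\eta}(\nu)\) that does not depend on \(n\), uses \(T_n\ge T_{h,\eta}(\nu)\) eventually, and handles the conditional claim via \(\mathbb{P}(\tau_h^{(n)}>\nu)\ge1-\alpha\), which comes from null validity of the pre-change segment. The differences are cosmetic: for the conditional claim the paper reapplies the theorem with \(\beta=\eta(1-\alpha)\), whereas you divide the vanishing unconditional tail by \(1-\alpha\); and for the \(\liminf\ge1-\alpha\) bound the paper uses the identity \(\mathbb{P}\{\nu<\tau_h^{(n)}\le\nu+T_n\}=\mathbb{P}\{\tau_h^{(n)}>\nu\}-\mathbb{P}\{\tau_h^{(n)}>\nu+T_n\}\), whereas you use the theorem's second conclusion.
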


\begin{corollary}
\label{cor:delayed-fixed-mixture-power}
Let \(P\) be continuous, let \(Q\) be arbitrary, and fix an integer
\(\nu\ge0\). Fix the features and weights in
Corollary~\ref{cor:fixed-mixture-power}, and let
\(M_{{\rm mix},t}^{(n)}\) and \(\tau_{\rm mix}^{(n)}\) be defined there.
If, for some \(\ell\),
\(\omega_\ell>0\) and \(\Delta_{h_\ell}(P,Q)\ne0\), then, for every
deterministic sequence \(T_n\to\infty\),
\[
 \mathbb{P}_{P^{n+\nu}\otimes Q^\infty}
 \left\{\tau_{\rm mix}^{(n)}\le\nu+T_n\right\}
 \longrightarrow1
\]
and
\[
 \mathbb{P}_{P^{n+\nu}\otimes Q^\infty}
 \left\{\tau_{\rm mix}^{(n)}\le\nu+T_n
 \,\middle|\,\tau_{\rm mix}^{(n)}>\nu\right\}
 \longrightarrow1.
\]
Moreover,
\[
 \liminf_{n\to\infty}
 \mathbb{P}_{P^{n+\nu}\otimes Q^\infty}
 \left\{\nu<\tau_{\rm mix}^{(n)}\le\nu+T_n\right\}
 \ge1-\alpha.
\]
\end{corollary}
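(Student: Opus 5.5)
The corollary follows from Proposition~\ref{prop:delayed-calibration-consistency} in the same way that Corollary~\ref{cor:fixed-mixture-power} follows from Proposition~\ref{prop:growing-calibration-consistency}. The plan is to use domination of the mixture by a single positively weighted component.

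Fix \(\ell\) with \(\omega_\ell>0\) and \(\Delta_{h_\ell}(P,Q)\ne0\). Write \(\tau_\ell^{(n)}(\alpha')\) for the stopping time of the PRM for \(h_\ell\) run at level \(\alpha'=\alpha\omega_\ell\). The wealth process \(M_{h_\ell,t}^{(n)}\) does not depend on the level; only the threshold does. Since \(M_{{\rm mix},t}^{(n)}\ge\omega_\ell M_{h_\ell,t}^{(n)}\) pathwise, the event \(M_{h_\ell,t}^{(n)}\ge1/(\alpha\omega_\ell)\) forces \(M_{{\rm mix},t}^{(n)}\ge1/\alpha\). Hence
\[
\tau_{\rm mix}^{(n)}\le\tau_\ell^{(n)}(\alpha\omega_\ell)
\]
almost surely, on the common probability space of \(D_0\), the pre-change observations and the post-change observations.

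\textbf{First claim.} Apply Proposition~\ref{prop:delayed-calibration-consistency} to \(h_\ell\) at level \(\alpha\omega_\ell\). This is legitimate because that proposition holds for every \(\alpha\in(0,1)\). It gives \(\mathbb{P}\{\tau_\ell^{(n)}(\alpha\omega_\ell)\le\nu+T_n\}\to1\), and the first display follows by the domination above.

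\textbf{Third claim.} Here the component bound is not enough, because the component's pre-change false-alarm control is only \(\alpha\omega_\ell\)-based and says nothing directly about the mixture. Instead I would bound the two pieces separately:
\[
\mathbb{P}\{\nu<\tau_{\rm mix}^{(n)}\le\nu+T_n\}\ge\mathbb{P}\{\tau_{\rm mix}^{(n)}\le\nu+T_n\}-\mathbb{P}\{\tau_{\rm mix}^{(n)}\le\nu\}.
\]
For the subtracted term, note that the first \(\nu\) observations are null. The event \(\{\tau_{\rm mix}^{(n)}\le\nu\}\) depends only on \(D_0\) and \(X_{1:\nu}\), whose joint law coincides with the law under \(H_0(P)\). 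The mixture is a nonnegative martingale under \(H_0(P)\), so Ville's inequality bounds this probability by \(\alpha\). Combining this with the first claim yields \(\liminf\ge1-\alpha\).

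\textbf{Second claim.} For the conditional statement, write
\[
\mathbb{P}(\tau\le\nu+T_n\mid\tau>\nu)=1-\frac{\mathbb{P}(\tau>\nu+T_n)}{\mathbb{P}(\tau>\nu)}.
\]
The numerator tends to zero by the first claim. The denominator is at least \(1-\alpha>0\) by the Ville bound, so the ratio tends to zero and the conditional probability tends to one.

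\textbf{Main obstacle.} The only delicate point is ensuring that the pre-change rejection event really is controlled by the null martingale property. It must be measurable with respect to \(D_0\) and the first \(\nu\) online observations alone, and it must not involve the post-change data. This holds because the algorithm is sequential and each \(\lambda_t\) is predictable, so \(M_{{\rm mix},t}^{(n)}\) for \(t\le\nu\) is a function of \(R_1,\ldots,R_t\) only. Ville's inequality over this finite range then applies directly.
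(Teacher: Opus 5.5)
Your proposal is correct and follows essentially the same route as the paper. Both arguments proceed in the same steps:
\begin{itemize}
\item use the domination $\tau_{\rm mix}^{(n)}\le\widetilde\tau_\ell^{(n)}$, where $\widetilde\tau_\ell^{(n)}$ is the time at which component $\ell$ crosses $1/(\alpha\omega_\ell)$;
\item apply Proposition~\ref{prop:delayed-calibration-consistency} at level $\alpha\omega_\ell$ to get the unconditional convergence;
\item apply Ville's inequality to the mixture itself on the null-distributed pre-change segment, giving $\mathbb{P}\{\tau_{\rm mix}^{(n)}>\nu\}\ge1-\alpha$;
\item obtain the conditional statement and the $\liminf$ bound from the same ratio and difference identities.
\end{itemize}
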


\subsection{Proof of Theorem~\ref{thm:delayed-feature-ons-rate}}

\begin{proof}
Let \(\mathbb{P}_{\nu}=\mathbb{P}_{P^{n+\nu}\otimes Q^\infty}\).
Throughout the proof, abbreviate
\[
  \Delta=\Delta_h(P,Q),
  \qquad
  \delta=|\Delta|,
  \qquad
  L=L_h,
  \qquad
  K=\max\{1,L^2\}.
\]
The feature normalization gives
\(\delta\le\min\{1,L\}\) and \(L\delta\le K\); in particular, \(L>0\).
For the proof, continue the ONS recursion after its first crossing; this
leaves \(\tau_h\) unchanged. Fix a deterministic integer \(T\ge1\), and put
\[
  \bar v_n
  =
  \frac{1}{n+1}
  \sum_{j=1}^{n+1}h\left(\frac{j-1}{n}\right),
  \qquad
  V_t=h\left(\frac{R_t-1}{n}\right).
\]
Conditional on \(D_0\), define
\[
  m_P=\mathbb{E}_P(V_t\mid D_0),
  \qquad
  m_Q=\mathbb{E}_Q(V_t\mid D_0),
\]
and
\[
  d_P=m_P-\bar v_n,
  \qquad
  d_Q=m_Q-\bar v_n.
\]
The quantity \(d_Q\) is precisely the realized contrast
\(d_{h,n}(D_0)\) in Lemma~\ref{lem:feature-drift}, while \(d_P\) is
the same quantity with \(Q=P\).

Let
\[
  \mathcal{H}_{\nu}
  :=
  \sigma(D_0,R_1,\ldots,R_{\nu}),
  \qquad
  D:=n+\nu+1,
\]
and define the realized signal at the change point by
\[
  d_{\nu}
  :=
  m_Q-
  \frac{
    (n+1)\bar v_n+\sum_{s=1}^{\nu}V_s
  }{D}.
\]
Writing
\[
  S_{\nu}
  :=
  \sum_{s=1}^{\nu}(V_s-m_P),
\]
where \(S_0=0\), gives the exact decomposition
\begin{equation}
  d_{\nu}
  =
  d_Q-\frac{\nu d_P+S_{\nu}}{D}.
  \label{eq:delayed-signal-decomposition}
\end{equation}

Assume for the moment that
\begin{equation}
  n+1\ge\frac{32L}{\delta}.
  \label{eq:delayed-grid-condition}
\end{equation}
Lemma~\ref{lem:feature-drift}, applied with \(x=\delta/(32L)\) first to
\(Q\) and then to \(P\), gives
\begin{align*}
 \mathbb{P}_{P^n}
 \left(|d_Q-\Delta|>\frac{\delta}{16}\right)
 &\le2\exp\left(-\frac{n\delta^2}{512L^2}\right),\\
 \mathbb{P}_{P^n}
 \left(|d_P|>\frac{\delta}{16}\right)
 &\le2\exp\left(-\frac{n\delta^2}{512L^2}\right).
\end{align*}
Conditional on \(D_0\), the variables
\(V_1,\ldots,V_{\nu}\) are independent, have common mean \(m_P\),
and lie in an interval of width at most one. Hence, for \(\nu\ge1\),
Hoeffding's inequality gives
\[
  \mathbb{P}_{\nu}\left(
    |S_{\nu}|>\frac{D\delta}{8}
    \,\middle|\,D_0
  \right)
  \le
  2\exp\left(
    -\frac{D^2\delta^2}{32\nu}
  \right).
\]
For \(\nu=0\), \(S_0=0\), so no Hoeffding bound is needed.

Define the \(\mathcal{H}_{\nu}\)-measurable event
\[
  \mathcal{E}_{\nu}
  :=
  \left\{|d_Q-\Delta|\le\frac{\delta}{16}\right\}
  \cap
  \left\{|d_P|\le\frac{\delta}{16}\right\}
  \cap
  \left\{|S_{\nu}|\le\frac{D\delta}{8}\right\}.
\]
On \(\mathcal{E}_{\nu}\),
\eqref{eq:delayed-signal-decomposition} yields
\[
  |d_{\nu}-\Delta|
  \le
  \frac{\delta}{16}
  +
  \frac{\nu}{D}\frac{\delta}{16}
  +
  \frac{\delta}{8}
  \le
  \frac{\delta}{4}.
\]
Therefore, with \(s=\operatorname{sign}(\Delta)\),
\begin{equation}
  s d_{\nu}\ge\frac{3\delta}{4}
  \qquad\text{on }\mathcal{E}_{\nu}.
  \label{eq:delayed-signal-lower-bound}
\end{equation}

A union bound and \(D^2/\nu\ge4(n+1)\) for \(\nu\ge1\), together with
\(K=\max\{1,L^2\}\), give
\[
  \mathbb{P}_{\nu}(\mathcal{E}_{\nu}^{c})
  \le
  6\exp\left(
    -\frac{n\delta^2}{512K}
  \right).
\]

Conditional on \(\mathcal{H}_{\nu}\), the post-change ranks are i.i.d.\ and
\(\mathbb{E}_{\nu}(V_{\nu+m}\mid\mathcal{H}_{\nu})=m_Q\). Thus, for every
\(m\ge1\),
\begin{align}
  \mathbb{E}_{\nu}
  \left(
    Z_{h,\nu+m}
    \,\middle|\,
    \mathcal{H}_{\nu}
  \right)
  &=
  m_Q
  -
  \frac{
    (n+1)\bar v_n
    +
    \sum_{s=1}^{\nu}V_s
    +
    (m-1)m_Q
  }{D+m-1}
  \notag\\
  &=
  \frac{D}{D+m-1}d_{\nu}.
  \label{eq:delayed-post-change-drift}
\end{align}

To account for the bets placed before the change, we need a suffix regret
bound for the current ONS state. Write
\[
  g_t=\frac{Z_{h,t}}{1+\lambda_t Z_{h,t}},
  \qquad
  A_t=1+\sum_{j=1}^t g_j^2.
\]
The predictable domain gives \(|\lambda_tZ_{h,t}|\le3/4\), so
\[
  |g_t|\le4,
  \qquad
  A_t\le1+16t.
\]
Retaining the endpoint terms in the proof of
Proposition~\ref{prop:ons-regret} gives, for every
\(u\in[-3/4,3/4]\),
\begin{equation}
  \begin{split}
  \log(1+uZ_{h,t})-\log(1+\lambda_t Z_{h,t})
  \le{}&
  \frac{1}{9}
  \left\{
    A_{t-1}(\lambda_t-u)^2
    -
    A_t(\lambda_{t+1}-u)^2
  \right\}
  +
  \frac{9}{4}\frac{g_t^2}{A_t}.
  \end{split}
  \label{eq:delayed-ons-one-step}
\end{equation}
For \(u\in[-3/4,3/4]\), write
\[
  L_{\nu,T}(u)
  :=
  \sum_{m=1}^T\log\{1+uZ_{h,\nu+m}\}.
\]
Summing \eqref{eq:delayed-ons-one-step} from
\(t=\nu+1\) to \(t=\nu+T\), and using
\[
  \sum_{t=\nu+1}^{\nu+T}\frac{g_t^2}{A_t}
  \le
  \log\left(\frac{A_{\nu+T}}{A_{\nu}}\right),
\]
gives
\begin{equation}
  \begin{split}
  \log M_{h,\nu+T}-\log M_{h,\nu}
  \ge{}&
  L_{\nu,T}(u)
  -
  \frac{A_{\nu}}{9}
  (\lambda_{\nu+1}-u)^2
  \\
  &-
  \frac{9}{4}
  \log\left(
    \frac{A_{\nu+T}}{A_{\nu}}
  \right).
  \end{split}
  \label{eq:delayed-ons-suffix-raw}
\end{equation}

Applying \eqref{eq:delayed-ons-one-step} from \(t=1\) to \(t=\nu\)
with comparator \(u=0\), and using \(\lambda_1=0\), yields
\[
  -\log M_{h,\nu}
  \le
  -\frac{A_{\nu}}{9}\lambda_{\nu+1}^2
  +
  \frac{9}{4}\log A_{\nu}.
\]
Consequently,
\begin{equation}
  \frac{A_{\nu}}{9}\lambda_{\nu+1}^2
  \le
  \log M_{h,\nu}
  +
  \frac{9}{4}\log A_{\nu}.
  \label{eq:delayed-ons-state-control}
\end{equation}
Combining \eqref{eq:delayed-ons-suffix-raw} and
\eqref{eq:delayed-ons-state-control}, then using
\(2\lambda_{\nu+1}u-u^2\ge-\lambda_{\nu+1}^2-2u^2\),
\(A_t\le1+16t\), and \eqref{eq:delayed-ons-state-control} again, gives the
following pathwise bound on \(\{\tau_h>\nu\}\), where
\(M_{h,\nu}<1/\alpha\):
\begin{equation}
  \begin{split}
  \log M_{h,\nu+T}
  \ge{}&
  L_{\nu,T}(u)
  -
  \log\frac1\alpha
  -
  \frac{9}{2}\log\{1+16(\nu+T)\}
  \\
  &-
  \frac{2(1+16\nu)}{9}u^2,
  \end{split}
  \label{eq:delayed-ons-suffix}
\end{equation}
Suppose now that
\begin{equation}
  T\le D=n+\nu+1.
  \label{eq:delayed-window-within-effective-reference}
\end{equation}
Put
\[
  \theta:=\frac{\delta T}{16(\nu+T+1)}.
\]
Because \(\delta\le1\), we have \(\theta\le1/16\). Hence
\(\log(1+x)\ge x-x^2\) applies to
\(x=s\theta Z_{h,\nu+m}\). On
\(\mathcal{E}_{\nu}\),
\eqref{eq:delayed-signal-lower-bound} and
\eqref{eq:delayed-post-change-drift} imply
\begin{align*}
  \mathbb{E}_{\nu}
  \left\{
    L_{\nu,T}(s\theta)
    \,\middle|\,
    \mathcal{H}_{\nu}
  \right\}
  &\ge
  \theta s d_{\nu}
  \sum_{m=1}^T
  \frac{D}{D+m-1}
  -
  \theta^2T
  \\
  &\ge
  \frac{3\theta\delta T}{8}
  -\theta^2T.
\end{align*}
The last inequality uses
\(\sum_{m=1}^T D/(D+m-1)\ge T/2\), which follows from \(T\le D\).
Since
\[
 T+\frac{2(1+16\nu)}9\le4(\nu+T+1),
\]
the choice of \(\theta\) gives
\begin{equation}
  \begin{split}
  &\mathbb{E}_{\nu}
  \left\{
    L_{\nu,T}(s\theta)
    \,\middle|\,
    \mathcal{H}_{\nu}
  \right\}
  -
  \frac{2(1+16\nu)}{9}\theta^2
  \\
  &\hspace{3cm}
  \ge
  \frac{\delta^2T^2}{128(\nu+T+1)}
  \qquad\text{on }\mathcal{E}_{\nu}.
  \end{split}
  \label{eq:delayed-effective-signal}
\end{equation}

Conditional on \(\mathcal{H}_{\nu}\), regard \(L_{\nu,T}(s\theta)\) as a
function of the independent post-change ranks. Replacing \(R_{\nu+j}\)
changes its own innovation by at most one and each later innovation
\(Z_{h,\nu+m}\) by at most \(1/(D+m-1)\). Since
\[
  \sup_{|z|\le1}
  \left|
  \frac{d}{dz}\log(1+s\theta z)
  \right|
  \le
  \frac{\theta}{1-\theta}
  <2\theta,
\]
changing \(R_{\nu+j}\) changes
\(L_{\nu,T}(s\theta)\) by at most
\[
  2\theta
  \left\{
    1+\sum_{m=j+1}^T\frac{1}{D+m-1}
  \right\}
  \le4\theta,
\]
where the last inequality again uses \(T\le D\).
The sum of squared bounded differences is at most \(16T\theta^2\).
McDiarmid's inequality and \eqref{eq:delayed-effective-signal} therefore
give, on \(\mathcal{E}_{\nu}\),
\begin{equation}
  \begin{split}
  \mathbb{P}_{\nu}\Bigg(
    L_{\nu,T}(s\theta)
    -
    \frac{2(1+16\nu)}{9}\theta^2
    <
    \frac{\delta^2T^2}{256(\nu+T+1)}
    \,\Bigg|\,
    \mathcal{H}_{\nu}
  \Bigg)
  \le
  \exp\left(
    -\frac{T\delta^2}{2048}
  \right).
  \end{split}
  \label{eq:delayed-comparator-concentration}
\end{equation}

Suppose also that
\begin{equation}
  \frac{\delta^2T^2}{256(\nu+T+1)}
  \ge
  2\log\frac1\alpha
  +
  \frac{9}{2}\log\{1+16(\nu+T)\}.
  \label{eq:delayed-crossing-condition}
\end{equation}
If \(\tau_h>\nu+T\), then \eqref{eq:delayed-ons-suffix} applies. On
\(\mathcal{E}_{\nu}\) and the complement of the lower-tail event in
\eqref{eq:delayed-comparator-concentration},
\eqref{eq:delayed-ons-suffix} and
\eqref{eq:delayed-crossing-condition} imply
\(\log M_{h,\nu+T}\ge\log(1/\alpha)\), a contradiction. Hence, whenever
\eqref{eq:delayed-grid-condition},
\eqref{eq:delayed-window-within-effective-reference}, and
\eqref{eq:delayed-crossing-condition} hold,
\begin{equation}
  \mathbb{P}_{\nu}(\tau_h>\nu+T)
  \le
  6\exp\left(
    -\frac{n\delta^2}{512K}
  \right)
  +
  \exp\left(
    -\frac{T\delta^2}{2048}
  \right).
  \label{eq:delayed-finite-window-tail}
\end{equation}

We now verify the grid, window, and crossing conditions for
\(T=T_{h,\beta}(\nu)\). Put
\[
  \ell=
  \log\left\{
    \frac{e(\nu+1)}{\alpha\beta\delta}
  \right\},
  \qquad
  r=\frac{\ell}{\delta^2}.
\]
Then \(\ell\ge1\) and \(r\ge1\).
Because \(L\delta\le K\), the calibration-size condition implies
\eqref{eq:delayed-grid-condition} whenever \(C_0\ge32\).

Let
\[
  T=
  \left\lceil
    C_1\{r+\sqrt{\nu r}\}
  \right\rceil.
\]
For now, let \(C_1\ge1\) and suppose that
\[
  C_0
  \ge
  C_1+1+\frac{C_1^2}{4}.
\]
Since \(n\ge C_0Kr\ge C_0r\), writing
\(T\le C_1\{r+\sqrt{(\nu+1)r}\}+1\) gives
\[
  n+\nu+1-T
  \ge
  r\left\{
    \frac{\nu+1}{r}
    -C_1\sqrt{\frac{\nu+1}{r}}
    +C_0-C_1-1
  \right\}
  \ge0.
\]
Thus \(T\le n+\nu+1=D\), proving
\eqref{eq:delayed-window-within-effective-reference}.

Next, let \(T_0=C_1\{r+\sqrt{\nu r}\}\). Since
\(t\mapsto t^2/(\nu+1+t)\) is increasing,
\[
  \frac{T^2}{\nu+1+T}
  \ge
  \frac{T_0^2}{\nu+1+T_0}.
\]
Because \(r\ge1\),
\[
 \nu+1+T_0
 \le
 (C_1+1)(\sqrt{\nu}+\sqrt{r})^2.
\]
Hence
\[
  \frac{T_0^2}{\nu+1+T_0}
  \ge
  \frac{C_1^2}{1+C_1}r
  \ge
  \frac{C_1}{2}r.
\]
Therefore,
\begin{equation}
  \frac{\delta^2T^2}{256(\nu+1+T)}
  \ge
  \frac{C_1}{512}\ell.
  \label{eq:delayed-crossing-lower}
\end{equation}

On the other hand, using
\(\sqrt{(\nu+1)r}\le(\nu+1+r)/2\) and \(r\ge1\),
\[
  1+16(\nu+T)
  \le
  24C_1(\nu+r+2)
\]
for \(C_1\ge1\). Furthermore,
\[
  \log(\nu+1)\le\ell,
  \qquad
  \log r
  =
  \log\ell+2\log\frac1\delta
  \le3\ell.
\]
It follows that
\begin{equation}
  \log\{1+16(\nu+T)\}
  \le
  \{6+\log(24C_1)\}\ell.
  \label{eq:delayed-log-upper}
\end{equation}
Choose \(C_1\ge2048\) large enough that
\[
  \frac{C_1}{512}
  \ge
  2+
  \frac{9}{2}\{6+\log(24C_1)\}.
\]
Then choose
\[
 C_0\ge
 \max\left\{1536,\ C_1+1+\frac{C_1^2}{4}\right\}.
\]
Combining \eqref{eq:delayed-crossing-lower} and
\eqref{eq:delayed-log-upper}, and using
\(\log(1/\alpha)\le\ell\), proves
\eqref{eq:delayed-crossing-condition}. The calibration condition and the
definition of \(T\) now give
\[
  6\exp\left(
    -\frac{n\delta^2}{512K}
  \right)
  \le
  6e^{-3\ell},
  \qquad
  \exp\left(
    -\frac{T\delta^2}{2048}
  \right)
  \le
  e^{-\ell}.
\]
Because \(\ell\ge\log(e/\beta)\),
\[
  6e^{-3\ell}+e^{-\ell}
  \le
  \left(6e^{-3}+e^{-1}\right)\beta
  <\beta.
\]
It follows from \eqref{eq:delayed-finite-window-tail} that
\[
  \mathbb{P}_{\nu}
  \left\{
    \tau_h>\nu+T_{h,\beta}(\nu)
  \right\}
  \le\beta,
\]
which proves the first conclusion of the theorem.

Under \(\mathbb{P}_{\nu}\), the joint law of
\((D_0,X_1,\ldots,X_{\nu})\) is the same as under \(H_0(P)\).
The anytime-validity result in Proposition~\ref{prop:feature-directed}
therefore gives
\[
  \mathbb{P}_{\nu}(\tau_h\le\nu)
  =
  \mathbb{P}_{H_0(P)}(\tau_h\le\nu)
  \le\alpha.
\]
Combining this bound with the first conclusion gives
\[
  \mathbb{P}_{\nu}\left\{
    \nu<\tau_h
    \le
    \nu+T_{h,\beta}(\nu)
  \right\}
  \ge
  1-\alpha-\beta,
\]
which proves the second conclusion.
\end{proof}

\subsection{Proof of Corollary~\ref{cor:delayed-order-ons-rate}}

\begin{proof}
The order feature has Lipschitz constant one, so the result follows directly
from Theorem~\ref{thm:delayed-feature-ons-rate}.
\end{proof}

\subsection{Proof of Proposition~\ref{prop:delayed-calibration-consistency}}

\begin{proof}
Fix \(\eta\in(0,1)\). The horizon
\(T_{h,\eta}(\nu)\) in
Theorem~\ref{thm:delayed-feature-ons-rate} does not depend on \(n\).
For all sufficiently large \(n\), the calibration-size condition holds,
and \(T_n\ge T_{h,\eta}(\nu)\).
The first conclusion of the theorem therefore gives
\[
 \liminf_{n\to\infty}
 \mathbb{P}_{P^{n+\nu}\otimes Q^\infty}
 \left\{\tau_h^{(n)}\le\nu+T_n\right\}
 \ge1-\eta.
\]
Since \(\eta\) is arbitrary, the claimed unconditional convergence follows.

For the conditional statement, apply the theorem with
\(\beta=\eta(1-\alpha)\). The observations up to time \(\nu\) have the
null law, so anytime validity gives
\[
 \mathbb{P}_{P^{n+\nu}\otimes Q^\infty}
 \left\{\tau_h^{(n)}>\nu\right\}
 \ge1-\alpha.
\]
For all sufficiently large \(n\),
\[
 \mathbb{P}_{P^{n+\nu}\otimes Q^\infty}
 \left\{\tau_h^{(n)}>\nu+T_n
 \,\middle|\,\tau_h^{(n)}>\nu\right\}
 \le
 \frac{\beta}{1-\alpha}
 =\eta.
\]
This proves the conditional convergence. Finally,
\[
 \mathbb{P}_{P^{n+\nu}\otimes Q^\infty}
 \left\{\nu<\tau_h^{(n)}\le\nu+T_n\right\}
 =
 \mathbb{P}_{P^{n+\nu}\otimes Q^\infty}
 \left\{\tau_h^{(n)}>\nu\right\}
 -
 \mathbb{P}_{P^{n+\nu}\otimes Q^\infty}
 \left\{\tau_h^{(n)}>\nu+T_n\right\}.
\]
The second term tends to zero, so the displayed identity and anytime
validity give the claimed lower bound. The eventual rejection claim follows
from the first convergence proved above.
\end{proof}

\subsection{Proof of Corollary~\ref{cor:delayed-fixed-mixture-power}}

\begin{proof}
Choose \(\ell\) satisfying the conditions in the corollary, and define
\[
 \widetilde{\tau}_{\ell}^{(n)}
 =
 \inf\left\{t\ge1:
 M_{h_\ell,t}^{(n)}\ge\frac{1}{\alpha\omega_\ell}\right\}.
\]
Since
\(M_{{\rm mix},t}^{(n)}\ge
\omega_\ell M_{h_\ell,t}^{(n)}\), we have
\(\tau_{\rm mix}^{(n)}\le\widetilde{\tau}_{\ell}^{(n)}\).
Applying Proposition~\ref{prop:delayed-calibration-consistency} to
\(h_\ell\) at level \(\alpha\omega_\ell\) gives
\[
 \mathbb{P}_{P^{n+\nu}\otimes Q^\infty}
 \left\{\tau_{\rm mix}^{(n)}>\nu+T_n\right\}
 \longrightarrow0.
\]
Because the mixture is an e-process and the observations through time
\(\nu\) have the null law,
\[
 \mathbb{P}_{P^{n+\nu}\otimes Q^\infty}
 \left\{\tau_{\rm mix}^{(n)}>\nu\right\}
 \ge1-\alpha.
\]
Hence
\[
 \mathbb{P}_{P^{n+\nu}\otimes Q^\infty}
 \left\{\tau_{\rm mix}^{(n)}>\nu+T_n
 \,\middle|\,\tau_{\rm mix}^{(n)}>\nu\right\}
 \le
 \frac{
  \mathbb{P}_{P^{n+\nu}\otimes Q^\infty}
  \{\tau_{\rm mix}^{(n)}>\nu+T_n\}
 }{1-\alpha}
 \longrightarrow0.
\]
The final claim follows from
\[
 \mathbb{P}_{P^{n+\nu}\otimes Q^\infty}
 \left\{\nu<\tau_{\rm mix}^{(n)}\le\nu+T_n\right\}
 =
 \mathbb{P}_{P^{n+\nu}\otimes Q^\infty}
 \left\{\tau_{\rm mix}^{(n)}>\nu\right\}
 -
 \mathbb{P}_{P^{n+\nu}\otimes Q^\infty}
 \left\{\tau_{\rm mix}^{(n)}>\nu+T_n\right\}.
\]
\end{proof}

\section{From CCTM's PAC Guarantee to Marginal Type I Error}
\label{app:cctm-validity}

This section explains why the CCTM calibration-conditional guarantee does not imply marginal type I error control at the same nominal level and gives continuous and discrete counterexamples.

\subsection{From the conditional statement to a marginal bound}

Theorem 3.1 of \citet{shaer2026testing} displays the statement
\[
 \mathbb{P}_{D_0}\!\left\{
  \mathbb{P}_{H_0}\!\left(
   \exists t\ge1:S_t\ge\frac1\alpha
   \,\middle|\,\mathcal F_{t-1},D_0
  \right)\le\alpha
 \right\}\ge1-\delta,
\]
where \(\mathcal F_{t-1}=\sigma(\widehat F_0(X_1),\ldots,\widehat F_0(X_{t-1}))\). This notation is not formally well defined because the crossing event ranges over all \(t\), whereas the conditioning field \(\mathcal F_{t-1}\) changes with \(t\). The intended statement should instead be written as
\[
 \mathbb{P}_{D_0}\!\left\{
  \mathbb{P}_{H_0(P)}\!\left(\sup_{t\ge1}S_t\ge\frac1a\,\middle|\,D_0\right)
  \le a
 \right\}\ge1-\delta.
\]
The conditional crossing probability is at most \(a\) for a set of calibration samples with probability at least \(1-\delta\), and at most one otherwise. Hence
\[
 \mathbb{P}_{H_0(P)}(\tau<\infty)
 \le a(1-\delta)+\delta.
\]
Using this bound, marginal level \(\alpha\) requires \(\delta<\alpha\) and \(a\le(\alpha-\delta)/(1-\delta)\). The synthetic experiments of \citet{shaer2026testing} use \(\alpha=0.05\) and \(\delta=0.1\). With \(a=0.05\), the bound is \(0.145\), and no nonnegative choice of \(a\) makes it at most \(0.05\).

\subsection{A continuous-null example}

The preceding conversion shows that the internal conditional level need not equal the marginal type I error. This difference can lead to an actual violation of the nominal level.

\begin{proposition}
\label{prop:continuous-cctm-counterexample}
Consider CCTM under the continuous null \(P=\operatorname{Unif}(0,1)\). At
\[
 n=5,\qquad \delta=0.9,\qquad k=10^{-6},\qquad \alpha=0.01,
\]
its infinite-horizon null rejection probability is at least
\[
 0.0122385070>\alpha.
\]
\end{proposition}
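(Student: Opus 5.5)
The plan is to bound the infinite-horizon rejection probability from below by the probability of one explicit, finite-horizon event on which CCTM is forced to cross \(1/\alpha=100\). Since \(\{\tau<\infty\}\supseteq\{\tau\le T\}\) for every \(T\), it suffices to exhibit a horizon \(T\) and a set of calibration and stream configurations on which the CCTM wealth \(S_T\) reaches \(100\). The probability of that set under \(\operatorname{Unif}(0,1)^{5+T}\) must then exceed \(0.0122385070\).

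\textbf{Step 1: make CCTM explicit at these parameters.} Using the implementation in Appendix~\ref{app:cctm-standard-ctm}, I will write CCTM's betting factor as a deterministic function of the ECDF value \(\widehat F_0(X_t)\in\{0,1/5,\ldots,1\}\), the DKW half-width, and the parameter \(k\). The half-width is
\[
\varepsilon=\sqrt{\log(2/\delta)/(2n)}=\sqrt{\log(20/9)/10}\approx0.2826 .
\]
Because \(n=5\), each factor can take only finitely many values, so \(S_t\) is a function of the counts of ranks observed so far. I expect \(k=10^{-6}\) to make the extreme-rank factors very large, so that a short run of extreme ranks, for example \(X_t>\max_iY_i\), already pushes the wealth past \(100\).

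\textbf{Step 2: choose the event and compute its probability.} I will enumerate the rank sequences \((R_1,\ldots,R_T)\) of small length \(T\) that force \(S_T\ge100\). By Theorem~\ref{thm:rank-eprocess}, under the null the rank sequence has the Pólya-urn law \(\pi_{t,j}=(1+N_{t-1,j})/(n+t)\). This law does not depend on \(P\), so the probability of any finite set of rank sequences is an exact rational number. For instance, the probability that the first \(m\) ranks all equal \(n+1\) is
\[
\prod_{t=1}^m \frac{t}{n+t}=\frac{m!\,n!}{(n+m)!}.
\]
Summing over the forcing sequences, chosen disjoint by taking the first crossing time, gives an exact lower bound. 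I will then check that this bound is at least \(0.0122385070\); the quoted ten-digit constant suggests the true value is an exact rational sum of this kind evaluated numerically.

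\textbf{Main obstacle.} The hard part is Step 2's combinatorial search. I must identify enough crossing paths, possibly at both extremes and over several time steps, to accumulate probability above \(\alpha=0.01\), while verifying each crossing from CCTM's exact clipping and betting formulas. I would first try paths consisting of consecutive extreme ranks at the top and the bottom, with first crossings at times \(1,2,3,\ldots\). I would extend to longer horizons only if the sum falls short, truncating at a finite \(T\) where the accumulated mass already exceeds the target.
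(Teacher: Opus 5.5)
There is a genuine gap. Enumerating short crossing paths cannot reach the stated constant, because your Step 1 expectation about the betting factors is wrong.

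\textbf{Why the factors are small.} The parameter \(k\) enters CCTM only through \(\sqrt{\eta_t^2+k^2}\) and \(\sqrt{1+k^2}\), so \(k=10^{-6}\) just makes the penalty \(\approx|\eta_t|\varepsilon\); it does not enlarge any factor. With \(\mathcal C=\{1/2+\sqrt{1+k^2}\,\varepsilon\}^{-1}\approx1.278\) and \(|\eta_t|\le1/2\), every factor \(1+g_t\) lies in \([1/2,\,1+\mathcal C(1/4-\varepsilon/2)]\approx[0.5,\,1.139]\). The first factor is slightly below one because \(\eta_1=0\).

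\textbf{Why path enumeration fails.} Reaching \(100\) therefore takes at least \(37\) steps. The all-top path of that length has urn probability \(1/\binom{42}{5}\approx1.2\times10^{-6}\). More generally, condition on the calibration spacings \(W\sim\operatorname{Dirichlet}(1,\ldots,1)\). The growth rate of the comparator \(\eta=1/2\) is \(L_+(W)=\sum_jW_j\ell_j^+\), and \(\mathbb P\{L_+(W)>x\}\) behaves roughly like \(0.0061(1-x/0.13)^5\). So most of the \(0.0122\) mass comes from calibration samples with small positive drift. For these, the crossing time is of order \(\log(100)/L_\pm(W)\), which is unbounded over the event. Any horizon you could enumerate captures only a small fraction of this mass.

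\textbf{Further problems with the plan.} The ONS state makes \(S_t\) depend on the order of the ranks, not only on their counts, so exhaustive enumeration at such horizons is infeasible. Since \(\mathbb P(\tau\le T)\uparrow\mathbb P(\tau<\infty)\), a finite horizon could work only if the infinite-horizon probability strictly exceeds \(0.0122385070\), and then only at an enormous \(T\). Finally, \(0.0122385070\) is not a rational urn sum. It involves \(\varepsilon=\sqrt{\log(20/9)/10}\) and the logarithms \(\ell_j^\pm=\log b_j^\pm\).

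\textbf{The missing idea.} What you need is an infinite-horizon argument that conditions on the calibration sample.
\begin{enumerate}
\item Given \(W\), the ECDF values \(\widehat F_0(X_t)=j/n\) are i.i.d.\ with probabilities \(W_j\). The strong law then gives linear growth of the log wealth of the fixed comparator \(\eta=\pm1/2\) whenever \(L_\pm(W)>0\).
\item The pathwise \(O(\log T)\) ONS regret bound transfers this growth to the CCTM wealth, which then crosses every threshold almost surely.
\item The probability \(\mathbb P\{L_+(W)>0\}=0.0061192535\) follows from the closed form for linear functionals of a uniform-simplex vector, \(\mathbb P(\sum_jc_jW_j>x)=\sum_{j:c_j>x}(c_j-x)^n/\prod_{r\ne j}(c_j-c_r)\), evaluated at \(c_j=\ell_j^+\) and \(x=0\).
\item Symmetry gives the same value for \(L_-\). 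The two events are disjoint because \(b_j^+b_j^-<1\) for every \(j\), so \(L_++L_-<0\).
\end{enumerate}
Your use of the P\'olya law is consistent with this, since it is the \(W\)-mixture. But the bound comes from the conditional-on-\(W\) growth argument, not from counting paths.
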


This example shows that CCTM does not control marginal type I error for all allowed parameter choices, even under a continuous null.

\begin{proof}
Let \(W=(W_0,\ldots,W_n)\) be the spacings made by the ordered calibration sample on \([0,1]\). Under the continuous uniform null, \(W\sim\operatorname{Dirichlet}(1,\ldots,1)\). Conditional on \(W\), the fixed ECDF value \(\widehat F_0(X_t)=j/n\) has probability \(W_j\), independently over \(t\).

Put
\[
 \epsilon=\sqrt{\frac{\log(2/\delta)}{2n}},\qquad
 \mathcal C=\{1/2+\sqrt{1+k^2}\,\epsilon\}^{-1}.
\]
For the two fixed ONS comparators \(\eta=\pm1/2\), define
\[
 b_j^{\pm}
 =1+\mathcal C\left[
 \pm\frac12\left(\frac jn-\frac12\right)
 -\sqrt{\frac14+k^2}\,\epsilon
 \right],\qquad
 \ell_j^{\pm}=\log b_j^{\pm}.
\]
The conditional long-run log growth of comparator \(\pm1/2\) is \(L_\pm(W)=\sum_{j=0}^nW_j\ell_j^\pm\). If \(L_\pm(W)>0\), the conditional strong law makes that comparator's log wealth grow linearly. The pathwise \(O(\log T)\) ONS regret bound then makes the analyzed CCTM wealth cross every fixed threshold almost surely.

For distinct coefficients \(c_0,\ldots,c_n\) and a vector uniform on the simplex, direct integration of the simplex slice gives
\[
 \mathbb{P}\!\left(\sum_{j=0}^n c_jW_j>x\right)
 =\sum_{j:c_j>x}
 \frac{(c_j-x)^n}{\prod_{r\ne j}(c_j-c_r)}.
\]
One derivation writes \(W_j=E_j/\sum_rE_r\) for independent unit-rate exponentials and uses partial fractions for the resulting signed weighted sum. At \(n=5\), \(\delta=0.9\), and \(k=10^{-6}\), substitution of \(c_j=\ell_j^+\) and \(x=0\) gives
\[
 \mathbb{P}\{L_+(W)>0\}=0.0061192535.
\]
Symmetry gives the same probability for \(L_-(W)>0\). These two events are disjoint: for every \(j\), \(b_j^+b_j^-<1\), so \(L_+(W)+L_-(W)<0\). Consequently the unconditional probability of a positive-growth comparator, and hence the infinite-horizon null rejection probability, is at least
\[
 2(0.0061192535)=0.0122385070>0.01.
\]
\end{proof}

\subsection{A discrete-null example}

CCTM computes
\[
  \widehat F_0(x)=\frac1n\sum_{i=1}^n\mathbf{1}\{Y_i\le x\}.
\]
Under a continuous null, \(\widehat F_0(X_t)\) is marginally uniform on \(\{0,1/n,\ldots,1\}\). This property can fail in the presence of ties. We tested the null
\[
  Y_i\sim\mathrm{Bernoulli}(1/2),
  \qquad
  X_t\sim\mathrm{Bernoulli}(1/2),
\]
with \(n=1000\), \(T=200\), \(\alpha=0.05\), \(\delta=0.1\), and 5000 repetitions. There is no distribution shift. Table~\ref{tab:tie} reports the results. CCTM rejects in every run, while Order PRM remains below the nominal level.

\begin{table}[ht]
\caption{Tie counterexample under a Bernoulli null.}
\label{tab:tie}
\begin{center}
\begin{tabular}{lcc}
\toprule
Method & False-alarm probability & Standard error\\
\midrule
Order PRM & 0.0344 & 0.0026\\
CCTM & 1.0000 & 0.0000\\
\bottomrule
\end{tabular}
\end{center}
\end{table}

The released CCTM implementation exhibits severe empirical type I error inflation under this null. Together with Proposition~\ref{prop:continuous-cctm-counterexample}, this gives counterexamples under both continuous and discrete nulls.

\section{Experimental Details}
\label{app:experiments}

This section specifies the two baselines, the type I error simulation, and the synthetic and CIFAR-10-C protocols.

\subsection{CCTM and Standard CTM}
\label{app:cctm-standard-ctm}

Algorithm~\ref{alg:cctm-implementation} combines Algorithms 1 and 2 of \citet{shaer2026testing}. 

\begin{algorithm}[H]
\caption{CCTM}
\label{alg:cctm-implementation}
\begin{algorithmic}[1]
\REQUIRE Fixed reference set \(D_0=\{X_i^0\}_{i=1}^n\), stream \(X_1,\ldots,X_T\), test level \(\alpha\), confidence bound level \(\delta\), and smoothing parameter \(k>0\)
\ENSURE Rejection time \(\tau_{\rm CCTM}\in\{1,\ldots,T\}\cup\{\infty\}\)
\STATE Estimate \(\widehat F_0(x)=n^{-1}\sum_{i=1}^n\mathbf{1}\{X_i^0\le x\}\).
\STATE Set \(\epsilon(x)=\epsilon_n=\sqrt{\log(2/\delta)/(2n)}\), and set
\(\mathcal C=\{0.5+\sqrt{1+k^2}\max_{\hat p\in[0,1]}\epsilon(\hat p)\}^{-1}\).
\STATE Initialize \(S_0=1\), \(\eta_1=0\), \(a_0=1\), and \(\tau_{\rm CCTM}=\infty\).
\FOR{\(t=1,\ldots,T\)}
 \STATE Set \(\widehat p_t=\widehat F_0(X_t)\) and evaluate \(\epsilon(\widehat p_t)\).
 \STATE Set
 \(g_t=\mathcal C\{\eta_t(\widehat p_t-0.5)-\sqrt{\eta_t^2+k^2}\epsilon(\widehat p_t)\}\).
 \STATE Update \(S_t=S_{t-1}(1+g_t)\).
 \IF{\(S_t\ge1/\alpha\)}
  \STATE Set \(\tau_{\rm CCTM}=t\) and return \(\tau_{\rm CCTM}\).
 \ENDIF
 \STATE Set
 \(g'_t=\mathcal C\{\widehat p_t-0.5-\eta_t\epsilon(\widehat p_t)/\sqrt{\eta_t^2+k^2}\}\).
 \STATE Set \(z_t=g'_t/(1+g_t)\) and \(a_t=a_{t-1}+z_t^2\).
 \STATE Set
 \(\eta_{t+1}=\max\{-1/2,\min(\eta_t+4z_t/a_t,1/2)\}\).
\ENDFOR
\STATE Return \(\tau_{\rm CCTM}=\infty\).
\end{algorithmic}
\end{algorithm}

Our simulations use the implementation released with that paper. Relative to Algorithm~\ref{alg:cctm-implementation}, the released code uses \((0.5+(1+k)\epsilon_n)^{-1}\) in place of \(\mathcal C\), replaces the ONS coefficient \(4\) by \(2/\{2-\log(3)\}\), and applies the clipping rule in Section~3.4 of \citet{shaer2026testing}. We use \(D=0.5\) and \(k=10^{-6}\), with \(\delta=C=0.1\) in the synthetic experiments and \(\delta=C=0.05\) in the CIFAR-10-C experiment.

Algorithm~\ref{alg:standard-ctm-implementation} gives the Standard CTM. Its randomized conformal \(p\)-value is from Section~2.2 of \citet{shaer2026testing}, and its ONS update is from the released implementation.

\begin{algorithm}[H]
\caption{Standard CTM}
\label{alg:standard-ctm-implementation}
\begin{algorithmic}[1]
\REQUIRE Reference set \(D_0=\{X_i^0\}_{i=1}^n\), stream \(X_1,\ldots,X_T\), test level \(\alpha\), betting bound \(D\), and clipping threshold \(C\)
\ENSURE Rejection time \(\tau_{\rm CTM}\in\{1,\ldots,T\}\cup\{\infty\}\)
\STATE Initialize \(\mathcal D_0=D_0\), \(S_0=1\), \(\eta_1=0\), \(a_0=1\), and \(\tau_{\rm CTM}=\infty\).
\FOR{\(t=1,\ldots,T\)}
 \STATE Draw \(U_t\sim\operatorname{Unif}(0,1)\) independently and set
 \(p_t=\{\sum_{Z\in\mathcal D_{t-1}}\mathbf 1\{Z<X_t\}+U_t[1+\sum_{Z\in\mathcal D_{t-1}}\mathbf 1\{Z=X_t\}]\}/(n+t)\).
 \STATE Set \(\mathcal D_t=\mathcal D_{t-1}\cup\{X_t\}\) and \(\widetilde\eta_t=\eta_t\mathbf 1\{|\eta_t|\ge C\}\).
 \STATE Update \(S_t=S_{t-1}\{1+\widetilde\eta_t(p_t-0.5)\}\).
 \IF{\(S_t\ge1/\alpha\)}
  \STATE Set \(\tau_{\rm CTM}=t\) and return \(\tau_{\rm CTM}\).
 \ENDIF
 \STATE Set \(v_t=2(p_t-0.5)\), \(z_t=v_t/(1+\eta_tv_t)\), and \(a_t=a_{t-1}+z_t^2\).
 \STATE Set \(\eta_{t+1}=\Pi_{[-D,D]}\{\eta_t+[2/\{2-\log(3)\}]z_t/a_t\}\).
\ENDFOR
\STATE Return \(\tau_{\rm CTM}=\infty\).
\end{algorithmic}
\end{algorithm}

We use \(D=0.5\) and \(C=0.1\) for Standard CTM.

\subsection{Type I error simulation}

Under the null, the calibration observations and the entire online stream are i.i.d. from \(\mathcal N(0,1)\), so no change occurs. We use \(\alpha=0.05\), monitor 20000 online observations, and estimate the rejection probability from 10000 repetitions for \(n\in\{10,500,1000,\ldots,5000\}\).

\begin{figure}[ht]
\centering
\includegraphics[width=0.86\linewidth]{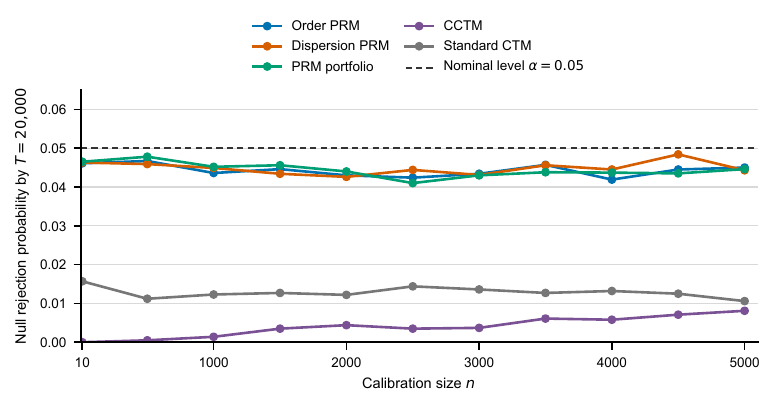}
\caption{Null rejection probabilities across calibration sizes.}
\label{fig:rank-ons-type1-sweep}
\end{figure}

In this Gaussian null experiment, none of the five estimated rejection
probabilities exceeds the nominal level (0.05). The PRMs are closer to the
nominal level than CCTM and Standard CTM.

\subsection{Complete results for the Gaussian location experiments}
\label{app:cctm-synthetic-suite}

Table~\ref{tab:cctm-nine-setting-delay} reports the complete results for the nine Gaussian settings described in the main text. Within each repetition, all four methods receive the same calibration sample and online stream. For delayed changes, \(t_{0.8}\) is computed from the post-change detection curve conditional on no alarm by the change point.

\begin{table}[H]
\caption{Time to \(80\%\) detection probability in the nine Gaussian settings of \citet{shaer2026testing}. Smaller is better. Reduction is measured relative to CCTM.}
\label{tab:cctm-nine-setting-delay}
\begin{center}
\small
\setlength{\tabcolsep}{3pt}
\begin{tabular}{lrrrrr}
\toprule
Setting & Order PRM & Portfolio & CCTM & Standard CTM & Reduction\\
\midrule
Immediate, \(d=1\) & \textbf{24} & 27 & 30 & 37 & \textbf{20.0\%}\\
Immediate, \(d=1.5\) & \textbf{15} & 17 & 17 & 24 & \textbf{11.8\%}\\
Immediate, \(d=2\) & \textbf{11} & 12 & 13 & 19 & \textbf{15.4\%}\\
Delayed, \(t_0=200\) & \textbf{27} & \textbf{27} & 34 & 38 & \textbf{20.6\%}\\
Delayed, \(t_0=600\) & 43 & \textbf{42} & 55 & 61 & \textbf{21.8\%}\\
Delayed, \(t_0=4000\) & 113 & \textbf{111} & 147 & 167 & \textbf{23.1\%}\\
Gradual, \(\lambda=0.015\) & \textbf{68} & 71 & 77 & 76 & \textbf{11.7\%}\\
Gradual, \(\lambda=0.03\) & \textbf{44} & 46 & 49 & 52 & \textbf{10.2\%}\\
Gradual, \(\lambda=0.05\) & \textbf{33} & 35 & 37 & 40 & \textbf{10.8\%}\\
\bottomrule
\end{tabular}
\end{center}
\end{table}

\subsection{CIFAR-10-C protocol}
\label{app:cifar10-c-small-n}

The image experiment uses a publicly available ResNet-20 model pretrained on CIFAR-10. For each image, we use the Shannon entropy of its softmax probabilities as the scalar monitoring score. We use all 15 main corruptions at severity 5. We randomly permute the 10,000 test-image indices. The first 50 indices provide the largest clean calibration sample, and the next 5000 indices provide the corrupted stream. The same indices are used across methods.

All four methods receive the same entropy-score stream in every trial.

\subsection{Additional CIFAR-10-C severity levels}
\label{app:cifar10-c-additional-severities}

Figures~\ref{fig:cifar10-c-severity-one}--\ref{fig:cifar10-c-severity-four} report the detection curves at severity levels 1--4. All other settings are unchanged.

\begin{figure}[htbp]
\centering
\includegraphics[width=0.9\textwidth]{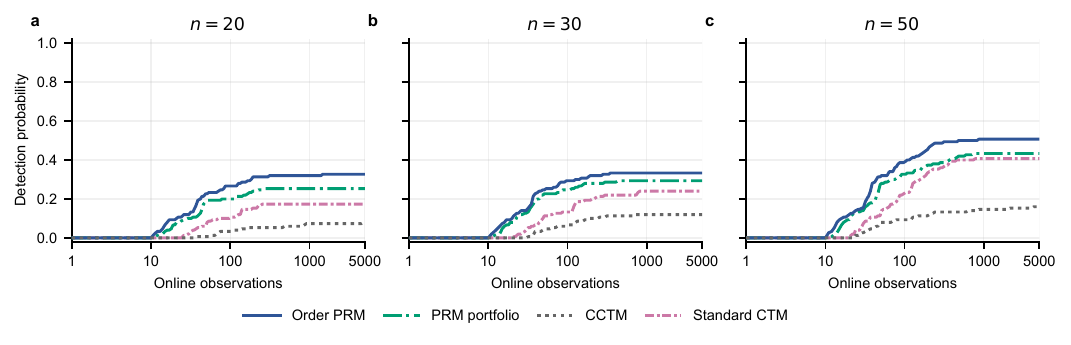}
\caption{Detection of CIFAR-10-C corruptions at severity 1 for \(n=20,30,50\).}
\label{fig:cifar10-c-severity-one}
\end{figure}

\begin{figure}[htbp]
\centering
\includegraphics[width=0.9\textwidth]{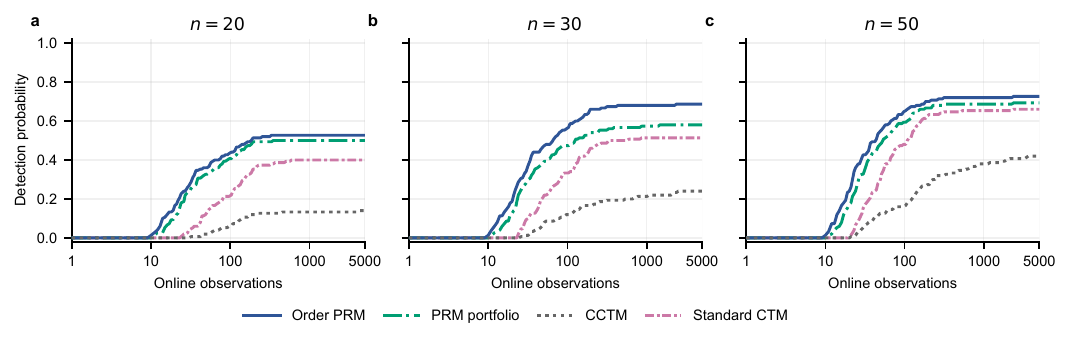}
\caption{Detection of CIFAR-10-C corruptions at severity 2 for \(n=20,30,50\).}
\label{fig:cifar10-c-severity-two}
\end{figure}

\begin{figure}[htbp]
\centering
\includegraphics[width=0.9\textwidth]{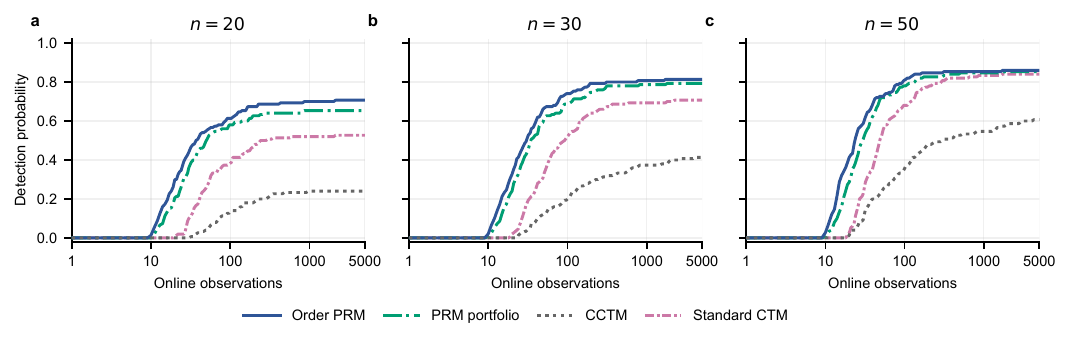}
\caption{Detection of CIFAR-10-C corruptions at severity 3 for \(n=20,30,50\).}
\label{fig:cifar10-c-severity-three}
\end{figure}

\begin{figure}[htbp]
\centering
\includegraphics[width=0.9\textwidth]{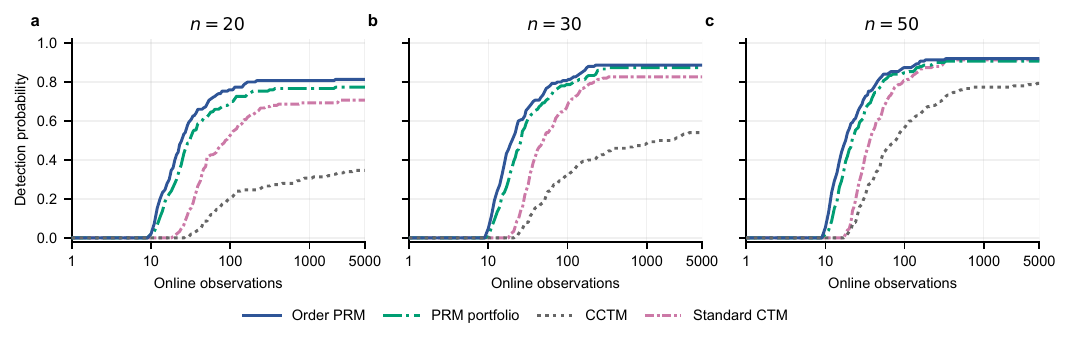}
\caption{Detection of CIFAR-10-C corruptions at severity 4 for \(n=20,30,50\).}
\label{fig:cifar10-c-severity-four}
\end{figure}

\end{document}